\documentclass[aps,floatfix,letterpaper,twocolumn,longbibliography]{revtex4-1}

\usepackage{graphicx} 

\usepackage{amsmath,amsfonts,amssymb,amscd}
\usepackage{amsthm}
\usepackage{thmtools}
\usepackage{thm-restate}
\usepackage[usenames,dvipsnames]{xcolor}
\usepackage{enumerate}
\usepackage{fancyhdr}

\usepackage{hyperref}
\hypersetup{
    bookmarksnumbered=true, 
    unicode=false, 
    pdffitwindow=true,
    pdfstartview={FitH}, 
    pdftitle={Robust Extraction of Tomographic Information via Randomized Benchmarking}, 
    pdfauthor={}, 
    pdfsubject={}, 
    pdfcreator={}, 
    pdfproducer={}, 
    pdfkeywords={}, 
    pdfnewwindow=true, 
    colorlinks=true, 
    linkcolor=blue, 
    citecolor=blue, 
    urlcolor=magenta 
}

\newtheorem{thm}{Theorem}[section]

\newtheorem{lem}[thm]{Lemma}
\newtheorem{claim}[thm]{Claim}

\numberwithin{equation}{section}

\newcommand{\eq}[1]{\hyperref[eq:#1]{Eq. (\ref*{eq:#1})}}

\renewcommand{\sec}[1]{\hyperref[sec:#1]{Section~\ref*{sec:#1}}}
\newcommand{\thrm}[1]{\hyperref[thm:#1]{Theorem~\ref*{thm:#1}}}
\newcommand{\lemm}[1]{\hyperref[lemm:#1]{Lemma~\ref*{lemm:#1}}}
\newcommand{\prop}[1]{\hyperref[prop:#1]{Proposition~\ref*{prop:#1}}}
\newcommand{\corr}[1]{\hyperref[corr:#1]{Corollary~\ref*{corr:#1}}}
\newcommand{\fig}[1]{\hyperref[fig:#1]{Figure~\ref*{fig:#1}}}
\newcommand{\app}[1]{\hyperref[app:#1]{Appendix~\ref*{app:#1}}}

\usepackage{xcolor}

\DeclareMathAlphabet{\matheu}{U}{eus}{m}{n}

\DeclareMathOperator{\tr}{tr}

\newcommand{\Pauli}{{\matheu{P}}}

\newcommand{\lamCn}{\sop N}

\newcommand{\bra}[1]{\langle{#1}|}
\newcommand{\ket}[1]{|{#1}\rangle}

\newcommand{\ketbra}[2]{|{#1}\rangle\!\langle{#2}|}

\newcommand{\kket}[1]{|{#1}\rangle\!\rangle}

\newcommand{\op}[1]{{\hat{#1}}}
\newcommand{\sop}[1]{{\mathcal{#1}}}
\newcommand{\e}{{\mathrm{e}}}
\newcommand{\Favg}{{\overline{F}}}

\newcommand{\I}{{\op {\mathbb I}}}


\newcommand{\CNOT}{{\mathrm{CNOT}}}

\newcommand{\PL}{{(\mathrm{PL})}}



\begin{document}


\title{Robust Extraction of Tomographic Information via
  Randomized Benchmarking} \author{Shelby Kimmel} 
\affiliation{Center
  for Theoretical Physics, MIT, Cambridge, MA}
\affiliation{Raytheon BBN Technologies, Quantum Information
  Processing Group, Cambridge, MA}
\author{Marcus P. da Silva, Colm A. Ryan, Blake R. Johnson, Thomas
  Ohki} 
\affiliation{Raytheon BBN Technologies, Quantum Information
  Processing Group, Cambridge, MA}

\begin{abstract}
  We describe how randomized benchmarking can be used to reconstruct
  the unital part of any trace-preserving quantum map, which in turn
  is sufficient for the full characterization of any unitary
  evolution, or more generally, any unital trace-preserving
  evolution. This approach inherits randomized benchmarking's
  robustness to preparation, measurement, and gate imperfections,
  therefore avoiding systematic errors caused by these imperfections.
  We also extend these techniques to efficiently estimate the average
  fidelity of a quantum map to unitary maps outside of the Clifford
  group. The unitaries we consider correspond to large circuits
  commonly used as building blocks to achieve scalable, universal, and
  fault-tolerant quantum computation. Hence, we can efficiently verify
  all such subcomponents of a circuit-based universal quantum computer.
  In addition, we rigorously bound the time and sampling complexities
  of randomized benchmarking procedures, proving that the required
  non-linear estimation problem can be solved efficiently.
\end{abstract}

\maketitle


\section{Introduction}
\label{sec:introduction}

While quantum process tomography~\cite{CN97} is a conceptually simple
approach to the characterization of quantum operations on states, its
implementation suffers from a number of fundamental drawbacks. These
obstacles range from its exponential scaling with the size of the
system, to its dependence on precise knowledge of state preparation
and measurement. Precise knowledge about state preparation requires
precise knowledge about operations and measurements, leading to a
difficult non-linear estimation
problem~\cite{stark1,stark2,stark3,MGS+12}. Lack of precise knowledge
about state preparation and measurement can also lead to significant
systematic errors in the reconstructed operations~\cite{WHE+04}.  Recently,
{\em randomized benchmarking} (RB) protocols have been shown to lead
to estimates of the average fidelity to Clifford group operations in a
manner that is robust against imprecise knowledge about state
preparation and measurement, and therefore largely free of some of the
systematic errors that can affect standard tomographic
reconstructions~\cite{EAZ05,KLR+08,MGE11,MGE12,GMT+12,MGJ+12}.

We describe a procedure that provides an almost complete
description of any quantum map in a way that is robust against
many errors that plague standard tomographic procedures. Specifically, 
we can estimate the unital part~\cite{HKL04,BZ08} of any trace-preserving
map, which includes all parameters
necessary to describe deterministic as well as random unitary
evolution. Furthermore, we show that a related protocol can be used to efficiently estimate 
the average fidelity to unitary operations outside the Clifford
group, again in a way that is accurate even in the presence
of state preparation, measurement, and unitary control errors. 

Both procedures use RB protocols as a tool, combined with several
new results: we show that Clifford group maps span
the unital subspace of quantum maps, and that important unitaries outside the Clifford group
 can be expressed as linear combinations of few Clifford group maps.
These insights, combined with new error strategies and analysis, 
allow us to robustly characterize maps that were previously inaccessible.

Our error analysis rigorously
proves that randomized benchmarking decays can be fit efficiently. 
We also prove new results on the average fidelity of composed maps, which
is important for RB, but is also of significance to any procedure where
direct access to a quantum map is limited.

This paper is organized as follows. In \sec{background} we give
background on general properties of quantum operations. In \sec{RB} we
sketch the RB protocol and describe the information that can be
extracted from such experiments. In \sec{tomography} we describe how
the information from RB experiments can be used to tomographically
reconstruct the unital part of any experimental map even in the
presence of imperfect randomizing operations.  In \sec{fidTandU}, we
show that it is possible to efficiently bound the fidelity of any such
experiment to a class of unitaries capable of universal quantum
computation.  Finally, in \sec{fid}, we analyze error propagation in
these protocols. This section includes new bounds on the effect of
imperfect randomizing operations, and rigorous bounds on the number of
samples needed to achieve some desired error and confidence.

\section{Completely Positive Trace Preserving Maps: Notation and Properties}
\label{sec:background}
Throughout this paper, we will restrict the discussion to
Hermiticity-preserving linear operations on quantum states---more
specifically, linear operations on multiqubit states, so that the
Hilbert space dimension will always be $d=2^n$ for $n$ qubits. The
physical operations within this class that are commonly considered are
completely-positive (CP) trace-preserving (TP) operations~\cite{Jam72,
  Cho75, Kra83}. We refer to these operations on quantum systems as
maps, and will denote them by the calligraphic fonts $\sop A, \sop B,$
etc. The composition of two maps will be denoted $\sop A\circ\sop B$,
meaning $\sop B$ acts on a state first, and then $\sop A$ acts.  Even
when discussing unitary evolution, we will refer to the corresponding
maps.  The notable exceptions are the identity unitary $\I$, and the
unitaries in the multi-qubit Pauli group $\Pauli$, which will be
denoted $\op P_i$---although the corresponding maps $\sop I$ and $\sop
P_i$ will also be used in some contexts. We will use the standard
convention where $\op P_0=\I$. We use $\sop T$ to mean the map
corresponding to the unitary $e^{-i\frac{\pi}{8}\op{Z}}$.

A map $\sop E$ is TP iff $\tr\op\rho=\tr\sop
E(\op\rho)$ for all $\op\rho$, which in turn leads to the requirement
that $\sop E^\dagger(\I)=\I$, where $\sop E^\dagger$ is the Heisenberg picture
representation of $\sop E$.  Any linear map $\sop E$ 
can be written as
\begin{align}
\sop E(\op\rho)=\sum_{i,j=0}^{d^2-1}\chi_{ij}^{\sop E}\op{P}_i\op\rho\op{P_j},
\end{align}
which is known as the {\em $\chi$ matrix representation of $\sop
  E$}. The map $\sop E$ is CP iff $\chi^{\sop E}$ is positive
semidefinite, and the TP condition $\sop E^\dagger(\I)=\I$ translates
to $\sum_{ij}\chi^{\sop E}_{ij}\op P_j\op P_i=\I$, which implies $\tr
\chi^{\sop E}=1$~\cite{CN97}. A map $\sop E$ is {\em unital} if
$\sop E(\I)=\I$.

It is often necessary to compute the representation of the composition
of two maps. While such a calculation can be cumbersome in the
$\chi$ representation, {\em Liouville representations} are more
convenient for describing the action of composed
maps on quantum states~\cite{Blum81}. In the Liouville representation, an
operator $\op \rho$ is represented by a column vector
$\kket{\op\rho}$, and maps are represented by matrices acting on
these vectors, such that the composition of maps corresponds to
matrix multiplication. The most convenient choice of basis for these
vectors and matrices depends on the application, but for our purposes
we will use the basis of Pauli operators, and will call this {\em the
  Pauli-Liouville representation} (which appears to have no standard
name in the literature, despite being widely
used~\cite{Leu00,KR01,Hav02,RDM02,SMKE08,CGC+12}). For a map $\sop E$, the
Pauli-Liouville representation is given by
\begin{align}
\sop E^\PL = \sum_{i,j=0}^{d^2-1} {\tr[\sop E(\op P_i)\op P_j]\over d} \ketbra{i}{j},
\end{align}
where $\op P_{i}$ and $\op P_{j}$ are $n$-qubit Pauli
operators. Hermiticity preservation implies that all matrix elements
of $\sop E^\PL$ are real. The $k^{\rm{th}}$ entry in the vector
$\kket{\op\rho}$ representing a density matrix $\op\rho$ corresponds
to $\tr \op\rho\op P_k$. This ensures that the Pauli-Liouville
representation of any CPTP map can be written as~\cite{Leu00,KR01}
\begin{align}
\sop E^{\PL}=\left(
\begin{array}{cc}
1& \vec{0}^T\\
\vec{\tau}_{\sop E} &\textbf{E}
\end{array}\right)
\end{align}
where $\vec{\tau}_{\sop E}$ is a $d^2-1$ dimensional column vector,
$\vec{0}$ is the corresponding zero vector, and $\textbf{E}$ is a
$(d^2-1)\times(d^2-1)$ matrix.

We will quantify how distinct a map $\sop E$ is from a particular
unitary map $\sop U$ by the average fidelity $\Favg(\sop E, \sop U)$,
which can be written as
\begin{align}
  \Favg(\sop E, \sop U)=\int d\mu(\psi)~\bra{\psi} (\sop U^\dagger \circ\sop
  E (\ketbra{\psi} {\psi}))\ket{\psi},
\end{align}
with integration taken over the unitarily invariant Fubini-Study
measure~\cite{BZ08}. This definition also implies $\Favg(\sop E, \sop
U) = \Favg(\sop E\circ \sop U^\dagger,\sop I) = \Favg(\sop U^\dagger
\circ \sop E,\sop I)$. The average fidelity is closely related to the
trace overlap between $\sop E^\PL$ and $\sop U^\PL$, as well as to
$\chi_{00}^{\sop E\circ\sop U^\dagger}$, by the
formulas~\cite{HHH99,nielsen02}
\begin{align}
\overline{F}(\sop E, \sop U) 
&= \dfrac{\tr \sop U^\dagger \sop E +d}{d(d+1)},\\
&= \dfrac{\chi_{00}^{\sop U^\dagger\circ\sop E}d+1}{d+1}.
\label{eq:Frelations-chi}
\end{align}
For simplicity and clarity, here, and throughout the paper, we omit
the superscripts from the Pauli-Liouville representation of
superoperators whenever they ocurr within trace expressions, as these
expressions already include superscripts indicating Hermitian
conjugates.

\section{Randomized Benchmarking of Clifford Group maps}

\label{sec:RB}

Randomized benchmarking (RB)~\cite{EAZ05,KLR+08,MGE11,MGE12,GMT+12,MGJ+12} consists
of a family of protocols to robustly estimate the average fidelity
$\overline{F}(\sop E, \sop U)$ between an experimental quantum map
$\sop E$ and an ideal unitary map $\sop U$. In this context, {\em
  robustness} refers to the ability to estimate $\overline{F}(\sop E,
\sop U)$ in a manner that is insensitive to imprecise or even biased
knowledge about state preparation, measurement, and controlled unitary
evolution. Such imperfections can lead to systematic errors, e.g., in
fidelity estimates based on standard tomographic reconstruction
protocols~\cite{MGS+12}.

We now describe a framework that can be used to understand existing RB
protocols, but which allows us to highlight how our protocol differs
from previous procedures.  RB protocols consist of $k$ repeated
applications of $\sop E$, each time preceded by independently chosen
randomizing unitary maps $\sop D_i$ where $1\le i\le k$, and, after the last application
of $\sop E$, followed by a recovery map $\sop D_{k+1}$. The
randomizing unitaries are chosen such that, if (i) the sequence is
applied to a fixed initial state $\ket{\psi}$, (ii) $\sop E$ is
identical to a certain unitary map $\sop U$, and (iii) the randomizing
maps $\sop D_i$ are perfect, then the final state would be
identical to the initial state.  If the first $k$ randomizing
operations are chosen from the Haar measure over unitary
maps~\cite{EAZ05,BZ08} or from a set with the same first- and
second-order moments as the Haar measure~\cite{DCEL09}, the fidelity
between the initial and final states can be shown to decay
exponentially with $k$ at a rate that depends only on
$\overline{F}(\sop E, \sop U)$~\cite{EAZ05,MGE11,MGE12}.  The RB
literature typically assumes either (1) $\sop U=\sop I$ and $\sop E$
represents the errors from the randomizing operations, or (2) $\sop U$
is some other unitary map, and $\sop E$ is its potentially
faulty implementation.  However, we emphasize our description is more
general, and as we will demonstrate later, allows us to reconstruct a
major portion of \textit{arbitrary} $\sop E$, not just implementations
of the randomizing operations.

In a realistic setting one cannot assume that the initial state is
pure and exactly known, that one knows what observable is measured
exactly, or that the randomizing operations are applied
noiselessly. However, these assumptions are not necessary for the RB
protocol to work: the initial state can be any mixed state $\op
\rho_0\not={1\over d}\I$, the measured observable $\op M$ can be any
observable where $\tr \op \rho_0\op M\not={1\over d}\tr \op M$, and
the rate of decay $p$ of the measured expectation value is still
related to $\overline{F}(\sop E, \sop U)$ in the same way. The
randomizing operations need not be noiseless
either~\cite{MGE11,MGE12}, as long as the imperfect randomizing
operations correspond to $\sop N\circ\sop D_i$, with $\sop N$
representing some arbitrary CPTP error map (some of these restrictions
may be relaxed, leading to more complex decays~\cite{MGE11,MGE12}, and
although our protocols generalize straightforwardly to such scenarios
we do not discuss them here for the sake of brevity). Under these more
realistic assumptions, $F_k(\sop E, \sop U) $, the average of $\langle
\op M\rangle$ over the choice of randomizing operations, for sequences
of length $k$, is given by
\begin{align}
\label{eq:model}
F_k(\sop E, \sop U) = A_0 p^k + B_0,
\end{align}
where $A_0$ and $B_0$ are constants that contain partial information
about the preparation and measurement (including imperfections), and
\begin{align}
\label{eq:prelation}
p
&={d~\overline{F}(\sop E\circ \sop N, \sop U) - 1 \over d - 1},\\
&={\tr \sop U^\dagger \sop E \sop N - 1\over d^2 - 1}.
\end{align}
By estimating $F_k(\sop E, \sop U)$ for different values of $k$, it is possible
to isolate $p$ (which contains the desired information about $\sop E$)
from $A_0$ and $B_0$ (which contain the undesired information about
preparation and measurement), creating a protocol that is largely free of
systematic errors caused by imprecise knowledge of state 
preparation and measurement~\footnote{In full
  generality, $p$ corresponds to an eigenvalue of the map
  resulting from randomizing $\sop E$ by conjugation with elements of
  either the full unitary group or the Clifford
  group~\cite{KR01,RDM02,ESM+07,SMKE08}. The eigenvalue interpretation
  can be used to more clearly see how independence of the estimate
  from the initial and final states comes about, and it can also be
  more naturally generalized to cases where the randomizing operations
  are elements of the Pauli group~\cite{SMKE08}. Randomization over
  more general operations can also be considered~\cite{BRS07}.}.

Case (1) discussed above is the original scenario considered in the RB
literature~\cite{EAZ05,KLR+08,MGE11,MGE12}, where $\sop U=\sop I$ and
$\sop E=\sop I$, so the observed decay leads to a direct estimate of
$\overline{F}(\sop N, \sop I)$, i.e., how well the randomization
operations are implemented. Case (2) discussed above is the extension
of RB to the extraction of information about $\overline{F}(\sop E,\sop
U)$, where $\sop E$ is one of the randomizing operations in the
experiment and $\sop U$ is its unitary idealization. This is a recent
development sometimes referred to as {\em interleaved
  RB}~\cite{GMT+12,MGJ+12}, but we do not make such a distinction in
this paper. The previously known result in this case is that
$\overline{F}(\sop E,\sop U)$ can be bounded by experimentally
estimating $\overline{F}(\sop E\circ \sop N,\sop U)$ and
$\overline{F}(\sop N,\sop I)$, and in Section~\ref{sec:bounds} we
provide more general bounds (with fewer assumptions) for the same
purpose.

While the RB protocol is valid for any choice of randomizing
operations discussed above, we emphasize that, in order to ensure the
protocols remain scalable in the number of qubits, $\sop U$ and $\sop
D_i$ are restricted to be unitary maps in the Clifford group, since
this allows for scalable design of the randomizing sequences via the
Gottesman-Knill theorem~\cite{Got99}. Moreover, although previous
works have applied the RB protocols only to $\sop E$ very close to
Clifford group maps, we emphasize that no restriction beyond $\sop E$
being CPTP needs to be imposed. The restricted applications of the RB
protocols in previous work was partially due to the bounds used to
isolate $\overline{F}(\sop E,\sop U)$ being only useful when $\sop E$ is
close to a Clifford group map. Since we are interested in extracting
information about arbitrary $\sop E$, we consider here
tomographic reconstruction techniques that {\em do not} rely on these
bounds. We also design efficient techniques for average-fidelity estimates
that rely on new and improved general bounds on $\overline{F}(\sop E,\sop
U)$.

In summary, RB allows for efficient estimation of
$\overline{F}(\sop E\circ\sop N, \sop U)$ and
efficient bounding of $\overline{F}(\sop E, \sop U)$ for
    $\sop U$ in the Clifford group. These estimates can be obtained
without relying on perfect information about preparation and
measurement errors, thereby avoiding some of the systematic errors
that may be present in standard tomographic protocols due to these
imperfections.

\subsection{RB sequence design}

A compact way to describe how RB sequences are constructed refers back
to the idea of {\em
  twirling}~\cite{BDS+96,DCEL09,ESM+07,BRS07,MSR+12}. Although this is
not how this construction is typically described, we found it to be
convenient, and include it for completeness.

If $\sop E$ is an abritrary quantum map, and $\matheu{S}$ is a set of
maps $\{\sop C_0, \cdots \}$, the average map
\begin{align}
\mathbb{E}_i[\sop C_i^\dagger \circ\sop U^\dagger \circ\sop E \circ \sop C_i] =
{1\over |\matheu{S}|}\sum_{\sop C_i\in \matheu{S}} \sop C_i^\dagger \circ \sop U^\dagger \circ\sop E \circ\sop C_i,
\end{align}
is called the twirl of $\sop U^\dagger \circ \sop E$ over
$\matheu{S}$, where $\mathbb{E}_i$ denotes the expectation value over
uniformly random choices for $\sop C_{i}\in\matheu{S}$. If
$\matheu{S}$ is the Clifford group or any other unitary
2-design~\cite{DCEL09}, then
\begin{align}
\mathbb{E}_i[\sop C_i^\dagger \circ\sop U^\dagger \circ\sop E \circ \sop C_i (\op \rho)] = p \op \rho + {(1-p)\over d} \I
\end{align}
where $p = {\tr \sop U^\dagger \sop E - 1 \over d^2 - 1} = {d \overline{F}(\sop E,
  \sop U) - 1\over d - 1}$ as before. 

A length $k$ RB sequence consists of applying the twirled channel
repeatedly to the same state $k$ times, i.e.,
\begin{align}
\label{eq:rb-seq}
& \mathbb{E}_{\vec{i}}[ 
\sop C_{i_k}^\dagger \circ\sop U^\dagger \circ\sop E \circ\sop C_{i_k} \circ\cdots\circ 
\sop C_{i_1}^\dagger \circ\sop U^\dagger \circ\sop E \circ\sop C_{i_1} (\op \rho) ]\notag,\\
=& \mathbb{E}_{\vec{i}}[ 
\sop D_{i_{k+1}} \circ\sop E \circ \sop D_{i_{k}} \circ\sop E \circ \sop D_{i_{k-1}}\circ\cdots\circ 
\sop D_{i_{2}} \circ\sop E \circ\sop D_{i_1} (\op \rho)]\notag,\\
=& p^k \op \rho + {(1-p^k)\over d} \I,
\end{align}
where 
\begin{align}
\sop D_{i_\ell} = \left\{ 
\begin{array}{ll}
\sop C_{i_{1}},& \ell=1,\\
\sop C_{i_{\ell}} \circ \sop C_{i_{\ell-1}}^\dagger \circ \sop U^\dagger , & 1<\ell<k\\
\sop C_{i_{k}}^\dagger \circ \sop U^\dagger,& \ell=k+1,
\end{array}
\right.
\end{align}
and $\mathbb{E}_{\vec{i}}$ denotes the expectation value over uniformly random choices 
for $\sop C_{i_\ell}\in\matheu{S}$ for all $\ell$.

The RB protocol to estimate $\overline{F}(\sop E, \sop U)$ then
consists of (i) choosing sequence of $\sop C_{i_\ell}$ for $1<\ell\le
k$, (ii) applying the alternating sequence of $\sop D_{i_\ell}$ and
$\sop E$, as prescribed in \eqref{eq:rb-seq}, to a fixed initial
state, (iii) measuring the resulting state, and (iv) averaging over
random choices for $\sop C_{i_\ell}$ to obtain $F_k$. The $F_k$ can be
fit against \eqref{eq:model}, yielding an estimate for $p$, even in
the presence of imperfections. As we prove in Section~\ref{sec:fid},
this estimate can be obtained efficiently in the number of qubits,
desired accuracy, and confidence. Note that neither $\sop E$ nor $\sop
U$ need to be elements of the Clifford group. However, we will
generally consider the case where $\sop E$ is \textit{not} a Clifford
group map, while $\sop U$ \textit{will} be chosen to be a Clifford
group map. Choosing $\sop U$ to be a Clifford group element makes the
design of the experiments for $n$-qubits
efficient~\cite{Got99,MGE12}, while leaving $\sop E$ unconstrained
affords us greater flexibility and has no impact on experiment design.

\section{Tomographic Reconstruction from RB}
\label{sec:tomography}

As discussed above, RB can efficiently provide bounds
on the fidelities of an arbitrary CPTP map $\sop E$ with any
element of the Clifford group---in a manner that is robust against
preparation and measurement errors, as well as imperfections in the
twirling operations. Here we demonstrate that the collection of such
fidelities of a fixed $\sop E$ to a set of linearly independent
Clifford group maps can be used to reconstruct a large portion of
$\sop E$. The advantage of this approach is that the robustness
properties of the estimates obtained via RB carry over to this 
tomographic reconstruction.

Using the Liouville representation of quantum maps, it is clear that an
estimate of the average fidelity $\overline{F}(\sop E, \sop U)$ leads
to an estimate of $\tr\sop U^\dagger \sop E$, and thus all information
that can be extracted from these fidelities for a fixed $\sop E$ is
contained in the projection of $\sop E$ onto the linear span of
unitary maps.  It turns out to be unnecessary to consider the span of
arbitrary unitary maps, as the following result demonstrates (see
Appendix~\ref{app:clifford-span} for the proof).

\begin{restatable}{lem}{cliffordspan} 
\label{lemm:cliff-span-unital}
  The linear span of unitary maps coincides with the linear span of
  Clifford group unitary maps. Moreover, the projection of a TP map to
  this linear span is a unital map.
\end{restatable}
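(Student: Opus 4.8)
The plan is to work entirely in the Pauli--Liouville representation and to treat the two assertions separately, since the first concerns which subspace is spanned while the second concerns an orthogonal projection onto it. The starting observation is that every unitary map is unital, $\sop U(\I)=\I$, so each $\sop U^\PL$ has a border-free structure: the $(0,0)$ entry equals $1$, while the remainder of the zeroth row and column vanish (the top row because $\op P_j$ is traceless for $j\neq0$, the first column because $\vec\tau_{\sop U}=0$). Hence the span of unitary maps already sits inside the space of border-vanishing matrices, and the problem reduces to pinning down the block structure exactly and then equating the two spans.

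For the first assertion let $V=\mathrm{span}\{\sop U^\PL\}$ over all unitaries and $W=\mathrm{span}\{\sop C^\PL:C\in\Cliffsn\}$; clearly $W\subseteq V$, so only the reverse inclusion is at stake. I would capture each span as the range of a second-moment operator,
\begin{align}
M_\Haar &= \int d\mu(U)\,\kketbbra{\sop U^\PL}{\sop U^\PL},\\
M_\Cliffsn &= \tfrac{1}{|\Cliffsn|}\sum_{C\in\Cliffsn}\kketbbra{\sop C^\PL}{\sop C^\PL},
\end{align}
where $\kket{\sop U^\PL}$ denotes the vectorization of the matrix $\sop U^\PL$, and where the range of such a sum or integral of rank-one terms is exactly the span of the constituent vectors. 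Each entry of $\sop U^\PL$ equals $\tr[U\op P_i U^\dagger\op P_j]/d$, linear in $U$ and in $U^\dagger$, so every entry of $\kketbbra{\sop U^\PL}{\sop U^\PL}$ is a polynomial of degree two in the entries of $U$ and degree two in those of $U^\dagger$. Because $\Cliffsn$ is a unitary $2$-design---it reproduces the first and second Haar moments~\cite{DCEL09}---these averages coincide, $M_\Cliffsn=M_\Haar$, and equality of the operators forces $V=W$.

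To identify $V$ itself, note that its linear span is closed under composition, $\sop U\circ\sop V=\sop{UV}$, so it is the algebra generated by the conjugation representation $U\mapsto\sop U$ and, by the double commutant theorem, equals the commutant of its own commutant. Under conjugation the operator space splits as $\mathbb{C}\I\oplus\{\text{traceless}\}$ into two inequivalent irreducible components, so by Schur's lemma the commutant is the two-dimensional algebra spanned by the projectors onto these sectors. Its commutant is then the algebra of all maps acting as a scalar on $\I$ and as an arbitrary operator on the traceless sector; in the Pauli basis (index $0$ labeling $\I$) these are exactly the matrices $\bigl(\begin{smallmatrix}a&\vec 0^T\\\vec 0&\mathbf M\end{smallmatrix}\bigr)$ with $a\in\mathbb{R}$ and $\mathbf M$ an arbitrary $(d^2-1)\times(d^2-1)$ real matrix. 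Thus $V$ is precisely the space of border-vanishing block-diagonal matrices.

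The projection is now immediate. Equip maps with the Hilbert--Schmidt inner product $\langle\sop A,\sop B\rangle=\tr[\sop A^\PL(\sop B^\PL)^T]$, which is real since all Pauli--Liouville entries are, and which is entrywise. As $V$ consists of block-diagonal matrices, its orthogonal complement is supported on the border entries alone, so the projection onto $V$ deletes the border while retaining the $(0,0)$ entry and the lower block. A TP map has $\sop E^\PL=\bigl(\begin{smallmatrix}1&\vec 0^T\\\vec\tau_{\sop E}&\mathbf E\end{smallmatrix}\bigr)$, so its projection is $\bigl(\begin{smallmatrix}1&\vec 0^T\\\vec 0&\mathbf E\end{smallmatrix}\bigr)$: the $(0,0)$ entry stays $1$ and $\vec\tau_{\sop E}$ is set to zero, which is exactly the condition $\sop E(\I)=\I$ for the result to be unital. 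I expect the genuine obstacle to be the reverse inclusion $V\subseteq W$; the $2$-design identity $M_\Cliffsn=M_\Haar$ is what does the real work, whereas the block structure and the entrywise projection are routine once that identity is in hand.
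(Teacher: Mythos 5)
Your proof is correct, but it takes a genuinely different route from the paper's. The paper argues constructively and self-containedly: it first proves (via local Cliffords and $\CNOT$ gadgets) that Clifford conjugation acts transitively on non-identity Pauli operators, then exploits the monomial structure of Clifford maps in the Pauli--Liouville representation---for a fixed Clifford $\op C$, the $4^n$ mutually orthogonal maps $\op P_i\op C$ span all matrices supported on the nonzero pattern of $\sop C^\PL$---so that transitivity guarantees every matrix element of the unital block is reachable, and the Clifford span is exactly the block-diagonal space (which in particular contains the span of all unitary maps, since those are unital and TP). You instead obtain span equality abstractly from the $2$-design property of $\Cliffsn$: the second-moment operators $M_{\Cliffsn}=M_{\Haar}$ have ranges equal to the respective spans, because each entry of $\kketbbra{\sop U^\PL}{\sop U^\PL}$ is a balanced degree-$(2,2)$ polynomial in $U$ and $\bar U$; and you then identify the unitary span via Schur's lemma plus the double commutant theorem for the conjugation representation. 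Your route is shorter and more general---any exact unitary $2$-design could replace the Clifford group verbatim---while the paper's approach buys elementarity (no design theorem, no representation theory) and produces an explicit family of Clifford maps supported on any desired matrix element, which is what the reconstruction protocol of Section~IV actually uses. Two points you should tighten. First, Schur's lemma and the double commutant theorem operate over $\mathbb{C}$, so they identify the \emph{complex} span as the complex block-diagonal algebra; since every $\sop U^\PL$ is real, the real span has the same dimension $1+(d^2-1)^2$ as the complex span and is contained in the real block-diagonal matrices, hence equals them---this one-line argument is needed to justify your claim that $a$ and $\mathbf{M}$ range over the reals. Second, the irreducibility of the traceless sector under conjugation (equivalently, irreducibility of the complexified adjoint representation of $SU(d)$) is standard but is doing real work in your commutant computation, so it should be stated as an invoked fact rather than left implicit.
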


Given a set of linearly independent vectors that span a subspace, and
the inner product of an unknown vector with all elements of that set,
it is a simple linear algebra exercise to determine the projection of
the unknown vector onto the subspace. Similarly, measuring the average
fidelity of some TP map $\sop E$ to a Clifford group map $\sop C_i$ is
equivalent to measuring such an inner product---the matrix inner
product $\tr(\sop E\sop C_i^\dagger)$.  Since Clifford maps span the
unital subspace of quantum CPTP maps, measuring the inner product of
$\sop E$ with a set of maximal linearly independent elements of the
Clifford group is sufficient to reconstruct the projection of $\sop E$
onto the unital subspace. We call this projection the {\em unital
  part} of $\sop E$, and denote it by $\sop E'$.

Since the unitality condition constrains
only how the map acts on the identity component of a state, $\sop E'$
can be obtained by changing how $\sop E$ acts on that
component. Defining $\sop Q$ to be the projector into the identity
component of any operator, and $\sop Q^\perp$ to be the projection
into the orthogonal complement (i.e. $\sop Q + \sop Q^\perp = \sop
I$), one finds that
\begin{align}
  \sop E &= \sop E \circ (\sop Q^\perp + \sop Q) 
= \sop E \circ \sop Q^\perp + \sop E \circ \sop Q,\\
  \sop E' & = \sop E \circ \sop Q^\perp + \sop Q,
\end{align}
which indicates that $\sop E$ and $\sop E'$ map traceless operators in
the same way. The maps $\sop E$ and $\sop E'$ have Pauli-Liouville representations
\begin{align}
\label{eq:nonunitalpart}
\sop E^{\PL}& =\left(
\begin{array}{cc}
1& \vec{0}^T\\
\vec{\tau}_{\sop E} &\textbf{E}
\end{array}\right),
&
\sop {E'}^{\PL}& =\left(
\begin{array}{cc}
1& \vec{0}^T\\
\vec{0}&\textbf{E}
\end{array}\right),
\end{align}
so we refer to $\vec\tau_{\sop E}$ as the non-unital part of $\sop E$.
It is then clear that $\sop E'$ is described by $(d^{2}-1)^2$ real
parameters if $\sop E$ is TP, while $\sop E$ itself is described by
$(d^{2}-1)d^{2}$ real parameters. The unital part of $\sop E$ contains
the vast majority of the parameters needed to describe $\sop E$---in
fact, over $93\%$ of the parameters for two qubits, and over $99\%$
of the parameters for four qubits.

As discussed, one limitation of RB is that in a realistic setting it
can only provide bounds for $\overline{F}(\sop E, \sop C_i)$ (and
therefore $\tr \sop E\sop C_i^\dagger$) due to the imperfections in
the randomizing operations. Clearly these bounds can only lead to a
description of parameter-space regions compatible with $\sop E'$ as
opposed to any point estimator, even in the absence of statistical
fluctuations. Our approach to reconstruct $\sop E'$ is to avoid these
bounds altogether and instead use the following result, which we prove
in Appendix~\ref{app:unital-recon}.

\begin{restatable}{lem}{noisyunitalest} 
\label{lemm:noisy-unital-est}
  If $(\sop E\circ\sop N)'$ is the unital part of $\sop E\circ\sop N$
  and $\sop N'$ is the unital part of $\sop N$, and all these
  operations are trace preserving, then $\sop E' =
  (\sop E\circ\sop N)'\circ(\sop N')^{-1}$ whenever $(\sop N')^{-1}$
  exists.
\end{restatable}

This allows us to reconstruct $\sop E'$ from the reconstructions of
$(\sop E\circ \sop N)'$ and $\sop N'$. As both $(\sop E\circ \sop N)'$
and $\sop N'$ are related directly to decay rates, we can create a
point estimate of $\sop E'$, without recourse to the bounds needed
in standard RB to characterize $\sop E$.

It should be noted that the only cases where $(\sop N')^{-1}$ does not
exist are when $\sop N$ completely dephases some set of observables
(i.e., maps them to something proportional to the identity). However,
the experimental setting where tomographic reconstructions are
interesting are precisely in the regime where $\sop N$ is far from
depolarizing any observable, so that $(\sop N')^{-1}$ is typically
well defined~\footnote{For $\sop N$ chosen at random to have
  unitary dilations that are Haar distributed, $(\sop N')^{-1}$
  appears to exist with probability 1, so it appears the requirement
  that $\sop N$ be close to $\sop I$ can be significantly
  weakened.}. The penalty, of course, is that the application of
$(\sop N')^{-1}$ leads to greater statistical uncertainty in the
estimate of $\sop E'$ thanks to the uncertainties in the
reconstructions of $\sop N'$ and $(\sop E\circ\sop N)'$ as well as
uncertainty propagation due to multiplication by $(\sop N')^{-1}$, but
larger experimental ensembles can be used to compensate for this, as
is discussed in the section that follows.

Moreover, writing the imperfect randomizing operations as $\sop N
\circ \sop C_i$ instead of $\sop C_i \circ \sop N^*$ for some
different map $\sop N^*$ is merely a convention, and
Lemma~\ref{lemm:noisy-unital-est} can be trivially adjusted to such a
different convention. In the physical regimes where RB estimates are
expected to be valid, the choice of conventions is largely immaterial
(see Appendix~\ref{app:gauge} for more details).

This result shows that the average fidelities with a spanning set of
Clifford group unitary maps can lead, not only to a point estimator of
the unital part of any TP map, but also to a point estimator of the
average fidelity of $\sop E$ to any unitary map---i.e., information
from multiple RB experiments can eliminate the need for the loose
bounds on the average fidelity considered in~\cite{MGJ+12}. This comes
at the cost of efficiency, as the unital part of a map---like the
complete map---contains an exponential number of parameters. However,
for a small number of qubits the overhead of reconstructing the unital
part is small, and therefore it is still advantageous to perform this
cancelation to get better estimates of the error.

\subsection{Example: Single Qubit Maps}

In order to reconstruct the unital part of a single-qubit map, one must
first consider a set of linearly-independent maps corresponding to
unitaries in the Clifford group. As this group contains 24 elements,
there are many different choices for a linearly independent set
spanning the 10-dimensional unital subspace. One particular choice of
unitaries leading to linearly independent maps is
\begin{align}
\op C_0&=\I, & 
\op C_1&=\e^{-i{\pi\over2}\op X}, \\
\op C_2&=\e^{-i{\pi\over2}\op Y}, & 
\op C_3&=\e^{-i{\pi\over2}\op Z},\\
\op C_4&=\e^{-i{\pi\over3}{\op X + \op Y + \op Z\over \sqrt{3} }}, & 
\op C_5&=\e^{-i{2\pi\over3}{\op X + \op Y + \op Z\over \sqrt{3} }}, \\
\op C_6&=\e^{-i{\pi\over3}{\op X - \op Y + \op Z\over \sqrt{3} }},  & 
\op C_7&=\e^{-i{2\pi\over3}{\op X - \op Y + \op Z\over \sqrt{3} }},\\ 
\op C_8&=\e^{-i{\pi\over3}{\op X + \op Y - \op Z\over \sqrt{3} }},  & 
\op C_9&=\e^{-i{2\pi\over3}{\op X + \op Y - \op Z\over \sqrt{3} }}.
\end{align}
In a noiseless setting, estimating the average fidelities between
these Clifford maps and the map
\begin{align}
\sop H^\PL & = 
\left(
\begin{array}{rrcr}
1 & \phantom{-}0 & \phantom{-}0 & \phantom{-}0 \\
0 & \phantom{-}0 & \phantom{-}0 & \phantom{-}1 \\
0 & \phantom{-}0 & -1 & \phantom{-}0\\
0 & \phantom{-}1 & \phantom{-}0 & \phantom{-}0
\end{array}
\right),
\end{align}
corresponding to the single-qubit Hadamard gate, leads to the decays
illustrated in Figure~\ref{fig:decays}.  The corresponding $p$ values are
\begin{align}
p_{0}=p_{2}=p_{8}=p_{9}&=-{1\over3},\\
p_{1}=p_{3}=p_{4}=p_{5}=p_{6}=p_{7}&=\phantom{-}{1\over3}.
\end{align}
Note, in particular, that some $p$ values are negative, which simply
indicates an oscillatory exponential-decay behaviour. While these
decay rates are much larger (i.e., the $p$ values are much smaller) than
those typically seen in previous RB protocols, we show in
\sec{conf-bounds} that it is possible to efficiently estimate any
decay rate to fixed accuracy, no matter the size.

\begin{figure}
\includegraphics[width=.45\textwidth]{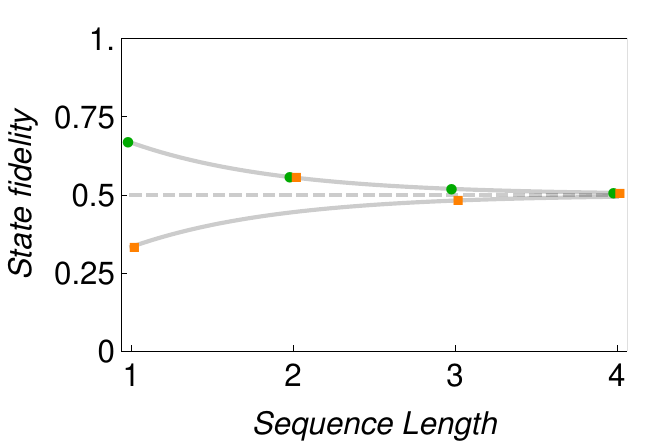}
\caption{RB decays used to estimate the fidelity between an ideal
  Hadamard gate and $\op C_0$ (green circles, with $p={1\over3}$), and
  $\op C_1$ (orange squares, with $p=-{1\over3}$). The decays corresponding to each
  of the remaining average fidelities coincide with one of these two representative
  decays.  Note that these decays are much faster than decays
  previously estimated in RB, as they corresponds to the average fidelities
  between very different maps.  The data points are offset along the x-axis
  for clarity.\label{fig:decays}}
\end{figure}

If one considers a noisy setting, where $\sop N$ is not the identity,
the decay rates are modified by $\sop N$, but after reconstructing
$\sop N'$ and $(\sop E \circ \sop N)'$ separately, one is able to
reconstruct $\sop E'$. To see that errors in the estimate of $\sop N'$ will not create
unmanageable errors in the estimate of $\sop E$, consider how errors in the estimate
of $\sop N'$ affect the estimate of $(\sop N')^{-1}$. The relative
error in the estimate of $(\sop N')^{-1}$ is given by~\cite{HJ85}
\begin{align}
{\|(\sop N')^{-1}-(\sop N' + \sop G')^{-1} \| 
\over 
\| (\sop N')^{-1} \|}
\le
{\kappa(\sop N')\over 1- \kappa(\sop N') 
{\|\sop{G'}\| \over \|{\sop N'}\|}} {\|\sop{G'}\|\over \|{\sop N'}\|},
\end{align}
as long as
\begin{align}
\|\sop{G'}\|\|(\sop N')^{-1}\|<1,
\end{align}
where $\sop G'$ is the error in the estimate of $\sop N'$, and
$\kappa(\sop N')$ is the condition number for the matrix inversion of
$\sop N'$ with respect to the matrix norm ${\|\cdot\|}$. The condition
number of $\sop A$ is given by $\kappa(\sop A) = \|\sop A^{-1}\| \|\sop A\|$ if
$\sop A$ is non-singular, and $\infty$ otherwise.

If we choose $\|\cdot\|$ to be the spectral norm, even when $\sop N'$
is the depolarizing map $\sop
D(\op\rho)=\delta\op\rho+(1-\delta){\I\over d}$, the condition number
of $\sop N'$ is given by $\kappa(\sop
N')={1\over|\delta|}$. Similarly, if $\sop N'$ is the dephasing map
$\sop Z(\op\rho)={1+\gamma\over2}\op\rho+{1-\gamma\over 2}\op
Z\op\rho\op Z$, one finds $\kappa(\sop N')={1\over|\gamma|}$. Thus,
even for $\delta$ and $\gamma$ polynomially close to 0, a polynomial increase in the 
number of statistical samples can be used to ensure an estimate of the
inverse of $\sop N$ to any polynomial accuracy with high probability.

\subsection{Beyond Unital Maps}
\label{sec:non-unital}

What does the reconstruction of $\sop E'$ tell us about
the $\sop E$? We prove in \app{projection} that 
\begin{restatable}{lem}{unitalproject} \label{lemm:project}
The unital part of a CPTP single-qubit 
map is always a CPTP map.
\end{restatable}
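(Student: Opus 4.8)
The plan is to exhibit the unital part $\sop E'$ as a genuine convex combination of CPTP maps, so that complete positivity follows at once; trace preservation is automatic, since $\sop E^\PL$ and $\sop{E'}^\PL$ share the first row $(1,\vec 0^T)$, which is exactly the TP condition. The single-qubit-specific ingredient is the operation $\sop S(\op\rho)=\op Y\op\rho^T\op Y$. A direct check on the Pauli basis gives $\sop S(\I)=\I$ and $\sop S(\op P_i)=-\op P_i$ for the three nontrivial Paulis $\op X,\op Y,\op Z$, so $\sop S^\PL=\mathrm{diag}(1,-1,-1,-1)$. Conjugating the block form in \eq{nonunitalpart} by this diagonal matrix flips the sign of the non-unital column $\vec\tau_{\sop E}$ while leaving $\textbf{E}$ and the first row untouched, so that
\begin{align}
\sop E'=\tfrac{1}{2}\left(\sop E+\sop S\circ\sop E\circ\sop S\right).
\end{align}

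It then remains to show that $\sop S\circ\sop E\circ\sop S$ is itself CPTP, which is the crux because $\sop S$ alone is only a positive (transpose-type) map, not a completely positive one. I would argue this at the level of Kraus operators: writing $\sop E(\op\rho)=\sum_i K_i\op\rho K_i^\dagger$ and using $\op Y^T=-\op Y$ together with $(K_i^\dagger)^T=\overline{K_i}$, direct substitution yields
\begin{align}
\sop S\circ\sop E\circ\sop S(\op\rho)=\sum_i\left(\op Y\overline{K_i}\op Y\right)\op\rho\left(\op Y\overline{K_i}\op Y\right)^\dagger,
\end{align}
a manifest operator-sum form, so the map is completely positive; trace preservation follows from $\sum_i(\op Y\overline{K_i}\op Y)^\dagger(\op Y\overline{K_i}\op Y)=\op Y\,\overline{\sum_i K_i^\dagger K_i}\,\op Y=\op Y\overline{\I}\op Y=\I$. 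Hence $\sop E'$ is an average of two CPTP maps and is therefore CPTP.

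The one delicate step is precisely this CP claim for $\sop S\circ\sop E\circ\sop S$: because $\sop S$ contains a transpose it is not CP in isolation, and the point is that the two transposes combine (the double-transpose / complex-conjugation trick) to carry a CP map to a CP map with complex-conjugated Kraus operators. I would phrase complete positivity on Hermitian inputs so that $\overline{\op\rho}=\op\rho^T$ may be used freely. I would also stress that the argument is genuinely one-qubit-specific: $\sop S$ works only because, for a single qubit, all three nontrivial Paulis can be sign-flipped simultaneously by a transpose conjugation, whereas for $n\ge 2$ there is no CP-preserving operation flipping all $d^2-1$ nontrivial Pauli components, consistent with the failure of the statement for larger systems. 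An alternative route---reducing $\sop E^\PL$ to King--Ruskai canonical form and invoking the Fujiwara--Algoet positivity inequalities to check that setting $\vec\tau_{\sop E}=0$ only relaxes the constraints---would also work, but it requires the explicit CP inequalities and care with improper rotations in the singular-value reduction, so the averaging argument above is cleaner and self-contained.
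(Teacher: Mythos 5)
Your proof is correct, and it takes a genuinely different route from the paper's. You exhibit the unital part as the average $\sop E' = \tfrac{1}{2}\left(\sop E + \sop S\circ\sop E\circ\sop S\right)$ with $\sop S(\op\rho)=\op Y\op\rho^T\op Y$, and establish complete positivity of the conjugated term by the double-transpose trick: the two transposes cancel in pairs, leaving the manifest operator-sum form with Kraus operators $\op Y\overline{K_i}\op Y$, so $\sop E'$ is a convex combination of CPTP maps. (Your algebra checks out; in fact the operator-sum identity holds for arbitrary, not just Hermitian, inputs, so your cautionary restriction to Hermitian $\op\rho$ is unnecessary, though harmless.) The paper instead reduces $\sop E$ to the King--Ruskai canonical form $\sop E=\sop U\circ\breve{\sop E}\circ\sop V$ with $\sop U,\sop V$ unitary and $\breve{\sop E}$ having a diagonal unital block, invokes the explicit complete-positivity inequalities of Ruskai et al.\ for maps of that form, and verifies---using auxiliary positivity bounds such as $\sum_i\lambda_i^2+\sum_i\tau_i^2\le1$---that setting the non-unital vector $\vec\tau$ to zero only relaxes those inequalities, before lifting the conclusion back through the unitaries. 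Your argument is more elementary and self-contained: it requires no canonical form and no CP inequalities, and it makes the single-qubit-specificity transparent, since the sign-flip $\sop S^\PL=\mathrm{diag}(1,-1,-1,-1)$ is implementable by a transpose conjugation only for $d=2$, consistent with the paper's observation that the lemma fails for multiple qubits. What the paper's route buys in exchange is direct contact with the explicit parametrization $(\lambda_i,t_i)$, which it reuses elsewhere (e.g., the constraint $|t_i|\le 1-|\lambda_i|$ bounding the non-unital part); your averaging argument, being basis-free, does not produce that quantitative by-product.
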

This means that the unital part of a single-qubit map
imposes no lower bound on the magnitude of the non-unital part of that
map---the non-unital part can always be set to 0.

For a single qubit, the unital part does impose stringent conditions
on the maximum size of the non-unital part. Up to unitary rotations,
any map can be written in the Pauli-Liouville representation as
~\cite{KR01}
\begin{align}\left(
\begin{array}{cccc}
1&0&0&0\\
t_1& \lambda_1&0&0\\
t_2&0& \lambda_2&0\\
t_3&0&0 &\lambda_3\\
\end{array}\right),
\end{align}
where $\lambda_i$ and $t_i$ are real valued parameters.  The
$\lambda_i$, corresponding to the unital part, can be estimated using
the techniques already described, but as \lemm{project} demonstrates,
no useful lower bound on $|t_i|$ can be obtained.  However, for the
map to be positive, it is necessary that $|t_i|\leq
1-|\lambda_i|$~\cite{KR01}, which gives upper bounds on the magnitudes
of the non-unital parameters.

The fact that, for single-qubit maps, $\sop E'$ is always CP can be
turned around to say that statistically significant non-CP estimates
of $\sop E'$ imply statistically significant non-CP estimates of $\sop
E$, and may be used as witnesses of systematic errors in the
experiments~\cite{WHE+04,MKS+12}.

\lemm{project} fails in the case of multiple qubits, and it is not
difficult to construct counter-examples. Numerical experiments
indicate that CPTP maps chosen at random by drawing unitary dilations
from the Haar distribution lead to non-CP unital parts with
probability $\sim 1$. This implies that, while it may not be possible
to test complete-positivity of a general map by testing only its
unital part, the reconstruction of the unital part of a multi-qubit
map yields lower-bounds on the magnitudes of the non-unital
parameters. Thus, while this result precludes the use of the unital
part of a multi-qubit map to test for systematic errors in
experiments, it does provide more information about the non-unital
parameters.

\section{Fidelity Estimation Beyond the Clifford Group}\label{sec:fidTandU}

Previous RB results showed how to bound the average fidelity of
Clifford operations~\cite{MGE12,MGJ+12}.  While the maps in the
Clifford group form an integral part of current approaches to scalable
fault-tolerance in quantum computers, universal quantum computation is
only possible if operations outside the Clifford group are also
considered. We would like to be able to not only efficiently verify
the performance of Clifford gates, but also would like to be able to
verify the performance of universal circuits. However, there are
strong indications that quantum computers are strictly more powerful
than classical computers; for example, if classical computers could
efficiently simulate certain classes of non-universal quantum
circuits, it would imply a collapse of the polynomial
hierarchy~\cite{Bremner2011}, and so is considered highly unlikely. It
is therefore extremely unlikely that classical computers can
efficiently predict the behaviour of a general
${\mathrm{poly}}(n)$-depth quantum circuit~\footnote{Here we take a
  {\em circuit} to mean a composition of quantum maps on $n$ qubits,
  and the depth to correspond to the number of maps composed.}, and
without these predictions, it is not possible to check if a quantum
computer is behaving as desired. For these fundamental reasons, we do
not expect that it is possible to efficiently estimate the average
fidelity to a general quantum map.

It is important to note, however, that it is possible to efficiently
simulate {\em some} circuits that contain maps outside the Clifford
group.  In particular, Aaronson and Gottesman~\cite{Aaronson2004} have
proven that circuits consisting of Clifford group maps and a
logarithmic number of maps outside the Clifford group can be simulated
efficiently. Despite being efficiently simulatable, these circuits can
be though of as discrete components that enable universality under
composition, and thus the ability to verify their implementation is of
great practical importance. We now show how our methods can be
extended to allow for efficient estimation of the average fidelity of
{\em any} experiment to such circuits.

\subsection{Average Fidelity to $\sop T$}
\label{sec:fid-T}

The canonical example of a map outside the Clifford group is the
operation $\sop T=e^{-i\frac{\pi}{8}\op{Z}}$. This gate is commonly
used in combination with Clifford group operations to form a gate set
that is universal for quantum computation~\cite{BMP+99}.  In this
section we show how to efficiently bound the average fidelity of a map
$\sop E$ to $\sop U=\sop T$.

In \sec{tomography} we prove that Clifford maps span the space of
unital maps. This implies that, in the Pauli-Liouville representation,
any unitary map $\sop U^\PL$ can be written as a linear combination of Clifford maps
\begin{align}
\label{eq:lincomb}
\sop U^\PL=\sum_i\beta_i^{\sop U}\sop C_i^\PL,
\end{align}
with $\beta_i\in\mathbb{R}$.
By linearity,
\begin{align}
\tr\sop E\lamCn\sop U^\dagger=\sum_i\beta_i^{\sop U}\tr\sop E\lamCn\left(\sop
C_i\right)^\dagger,
\end{align}
so
\begin{align}
\label{eq:fid-sum}
\Favg(\sop E\circ\lamCn,\sop U)=\sum_i\beta_i^{\sop U}\Favg(\sop E\circ\lamCn,\sop
C_i)+\dfrac{1}{d+1}\left(1-\sum_i\beta_i^{\sop U}\right).
\end{align}

For an arbitrary unitary $\sop U$, the number of non-zero $\beta_i^{\sop U}$, which we
denote by $N_\sop U$, can be as large as $O(d^2)$. However $\sop
T^\PL$ can be written as a linear combination of three Clifford maps. The
support of $\sop T^\PL$ is given by the maps corresponding to the Clifford
group unitaries, $\I$, $\op Z$, and $e^{-i\frac{\pi} {4}\op Z}$, with
the corresponding coefficients $\frac{1}{2}$, $\frac{1-\sqrt{2}} {2}$,
and $\frac{1}{\sqrt{2}}$.  Thus, to estimate $\Favg(\sop
E\circ\lamCn,\sop T$), one only needs to estimate 3 average fidelities
to Clifford group maps (instead of the 10 necessary for reconstruction
of the unital part).

Suppose one estimates each fidelity $\Favg(\sop E\circ\lamCn,\sop
C_i)$ for all of the $\sop C_i$ in the linear combination to within $\epsilon'$ with confidence $1-\delta'$. 
In \sec{conf-bounds} we will show this requires  
$O\left ({N_\sop U\over{\epsilon'}^4}\log{1\over\delta'}\right)$
samples. From \eq{fid-sum} it is clear that one can obtain an  
estimate $\widetilde{F}$ such that
\begin{align}
\Pr\left(|\widetilde{F}-\Favg(\sop E\circ\lamCn,\sop U)|\ge\epsilon'\sum_i|\beta_i^{\sop U}|\right)\le N_\sop U\delta'.
\end{align}
Choosing $\delta'=\delta/N_\sop U$ and $\epsilon'=\epsilon/\sum_i|\beta_i^{\sop U}|$ gives
\begin{align}
\Pr(|\widetilde{F}-\Favg(\sop E\circ\lamCn,\sop U)|\ge\epsilon)\le\delta,
\end{align}
and requires $O\left (N_\sop U{\left(\sum_i|\beta_i^{\sop
        U}|\over{\epsilon}\right)^4}\log{N_{\sop
      U}\over\delta'}\right)$ samples.

For the particular case of the $\sop T$ map, one finds
$\sum_i|\beta_i^{\sop T}|=\sqrt{2}$, so an estimate for the average
fidelity to $\sop T$ can be obtained by the following procedure:
\begin{enumerate}
\item Perform RB with $O\left({1\over{\epsilon}^4}\log{1\over
      \delta}\right)$ samples for each relevant fidelity
  $\Favg(\sop E\circ\lamCn,\sop C_i)$ . This requires
  $O\left({1\over{\epsilon}^4}\log{1\over \delta}\right)$ total
  samples and results in an estimate $\widetilde{F}$ such that
\begin{align}
\label{eq:conf}
P(|\widetilde{F}-\Favg(\sop E\circ\lamCn,\sop T)|
\ge\epsilon)\le\delta.
\end{align}
\item Perform RB with $O\left({1\over{\epsilon}^4}\log{1\over
      \delta}\right)$ samples to obtain an estimate $\widetilde{F}_{\lamCn}$ of
  $\Favg(\lamCn,\sop I)$ such that
\begin{align}
P(|\widetilde{F}_{\lamCn}-\Favg(\lamCn,\sop I)|\ge\epsilon)\le\delta, 
\end{align}
\item In \sec{bounds}, we show how to bound the fidelity of $\Favg(\sop E,\sop U)$,
given estimates of $\Favg(\sop E\circ \sop N,\sop U)$ and $\Favg(\sop N,\sop I)$. Apply
the bounds of \sec{bounds} for $\Favg(\sop E\circ\sop N,\sop T)=\widetilde{F}\pm\epsilon$,
and for $\Favg(\sop N,\sop I)=\widetilde{F}_\sop N\pm \epsilon$, to obtain bounds
on $\Favg(\sop E,\sop T)$ that are valid with probability
  at least $1-2\delta$.
\end{enumerate}

This procedure trivially extends to bounding the fidelity of $\sop E$
to the case where $\sop T$ acts on a single qubit and the identity
acts on $n-1$ qubits. The sampling complexity remains the
same, but the time complexity changes, as the classical preprocessing
time needed to make a single average fidelity estimate scales as
$O(n^4)$~\cite{MGJ+12}.  Similar arguments can be used to show that
the sampling complexity of determining the average fidelity of $\sop
E$ to any 1- or 2-qubit unitary acting on $n$ qubits is constant, with
the same classical preprocessing time complexity. In the next section,
we will discuss more general operations acting on $n$ qubits.

\subsection{Average Fidelity to More General Unitaries}

It is possible to efficiently bound the average fidelity of a map
$\sop E$ to a unitary $\sop U$ when $\sop U$ is a
composition of $O(\text{poly}(n))$ Clifford maps and $O(\log(n))$
$\sop T$ maps on $n$ qubits (i.e., maps that acts as $\sop T$ on one
qubit and as the identity on the remaining $n-1$ qubits). Under these
constraints,
\begin{enumerate}[(i)]
\item $\sop U^\PL$ can be efficiently decomposed into a linear combination of 
$O(\textrm{poly}(n))$ Clifford maps. (i.e. $N_\sop U=O(\textrm{poly}(n))$)
\item The coefficients $\beta_i^{\sop U}$ in the linear combination satisfy 
$\sum_i|\beta_i^{\sop U}|=O(\textrm{poly}(n))$
\end{enumerate}
Following the argument of \sec{fid-T}, the sampling complexity scales like
$O\left (N_\sop U{\left(\sum_i|\beta_i^{\sop U}|\over{\epsilon}\right)^4}\log{N_{\sop U}\over\delta'}\right)$,
so together (i) and (ii) guarantee that the 
sampling complexity of bounding $F(\sop E,\sop U)$ is $O(\textrm{poly}(n))$.
Since (i) guarantees that the decomposition is efficient, and the classical
preprocessing time needed to make a single sample scales as $O(n^4)$,
the time complexity is also $O(\textrm{poly}(n))$.

We prove (i) by induction on $t$, the number of $\sop T$ maps in the 
circuit, and $c$, the number of Clifford maps in the circuit. We show one can
decompose $\sop U^\PL$ into a linear combination of at most $3^t$ terms, where
each Clifford map in the linear combination is written as a composition of at most
$t+c$ Clifford maps.
The base case is given by
\begin{itemize}
\item $t=1$, $c=0$: $\sop U$ is a single $\sop T$, and $\sop U^\PL$ can
be written as a linear combination of $3$ Clifford maps.
\item $t=0$, $c=1$: $\sop U$ is a Clifford and so $\sop U^\PL$ can
be written as a linear combination of $1$ Clifford map.
\end{itemize}

For the inductive case, assume one has a unitary $\sop U$ which is a 
composition of $t$ $\sop T$ maps and $c$ Clifford maps.
By inductive assumption, $\sop U^\PL$ can be written as
\begin{align}
\sop U^\PL=\sum_{i=1}^{M}\beta_i^{\sop U}\prod_{j=1}^{N_i}\sop C_{i,j}^\PL,
\end{align}
with $M\leq 3^t$ and $N_i\leq t+c$. Now consider
composing $\sop U$ with a Clifford $\sop C$. Then 
\begin{align}
\sop C^\PL\sop U^\PL=\left(
\sum_{i=1}^{M}\beta_i^{\sop U}\sop C^\PL\prod_{j=1}^{N_i}\sop 
C_{i,j}^\PL\right) ,
\end{align}
and one obtains a linear combination of $\leq 3^t$ terms, each a
composition of $c+t+1$ Clifford maps. Likewise, if $\sop U$ is
composed with $\sop T$, then
\begin{align}
\sop T^\PL\sop U^\PL =
\sum_{i=1}^{M}\sum_{k=1}^3\beta_i^{\sop U}\beta_k^{\sop T}
\sop C_k^{\sop T\PL}\prod_{j=1}^{N_i}\sop C_{i,j}^\PL,
\end{align}
where the $\sop C_k^\sop T$ are the three Clifford maps involved in the linear combination
of $\sop T^\PL$, so one obtains a linear combination of $3^{t+1}$ elements, each a composition
of $t+c+1$ Clifford maps, as desired.

Therefore, if one has a unitary map $\sop U$ composed of
$O(\text{poly}(n))$ Clifford maps and $O(\log(n))$ $\sop T$ maps, one
can write $\sop U^\PL$ as a linear combination of $O(\text{poly}(n))$ Clifford
maps, where each term in the linear combination is a composition of at
most $O(\text{poly}(n))$ Clifford maps. A sequence of
$O(\text{poly}(n))$ Clifford maps can be efficiently simplified into a
single Clifford map using the Gottesman-Knill Theorem~\cite{Got99}.
The average fidelity estimate to $\sop U$ is obtained by estimating
the average fidelities to these simplified Clifford maps.

To see that (ii) also holds, suppose one calculates a
linear combination for $\sop U^\PL$ based on the above construction. It is
possible that different terms in the linear combination result in the
same Clifford map, 
but for simplicity we treat each term separately,
so that our estimate of the complexity is an upper bound. Then if the
circuit decomposition of $\sop U$ contains $t$ $\sop T$ maps,
\begin{align}
\sum_i|\beta_i^{\sop U}|\le
\left(\sum_i\left|\beta_i^{\sop T}\right|\right)^{t}=\sqrt{2}^t,
\end{align}
so $\sum_i|\beta_i^{\sop U}|$ scales, at most, as
$O(\textrm{poly}(n))$ for $t=O(\log n)$.

These results demonstrate that robust estimates of the average fidelities to
unitary maps outside the Clifford group can be obtained efficiently,
scaling polynomially in the number of qubits.

\section{Bounding Error in Average Fidelity Estimates}
\label{sec:fid}

In this section, we bound sources of error that occur in RB procedures. 
There are two sources of uncertainty we consider. When trying to efficiently estimate
the average fidelity $\sop E$ without inverting $\sop N$, as we do in \sec{fidTandU}, we lack
 direct access to $\Favg(\sop E, \sop U)$ and instead can
only estimate $\Favg(\sop E\circ\sop N, \sop U)$ and $\Favg(\sop N, \sop
I)$. This leads to error on our estimate of $\Favg(\sop E, \sop U)$.
We also consider statistical error from the sampling of random
variables, and show that we can efficiently fit RB decays to any constant error. 
As a consequence, this allows us to efficiently bound the average fidelity to
maps outside the Clifford group, as described in \sec{fidTandU}. We address these two effects separately. These types
of uncertainties can be found in many contexts, so we expect the
analysis in \sec{bounds} and \sec{conf-bounds} has broader
applications.

\subsection{Bounds on Average Fidelity of Composed Maps}\label{sec:bounds}
In this section, we show how to bound $\Favg(\sop E,\sop U)$,
when you have estimates of $\Favg(\sop E\circ \sop N,\sop U)$ and $\Favg(\sop N,\sop I)$.

In \app{bound} we prove
\begin{align}
\label{eq:bound}
\chi_{0,0}^{\sop A\circ \sop B}=&\chi_{0,0}^{\sop A}\chi_{0,0}^{\sop 
B}\pm\nonumber\\
&\Big(2\sqrt{(1-\chi_{0,0}^{\sop A})\chi_{0,0}^{\sop A}
(1-\chi_{0,0}^{\sop B})\chi_{0,0}^{\sop B}}\nonumber\\
&+(1-\chi_{0,0}^{\sop A})(1-\chi_{0,0}^{\sop B})\Big).
\end{align}
Setting $\sop B=\sop N$, $\sop A=\sop U^\dagger\circ \sop E$, and using
 \eq{Frelations-chi} gives bounds on $\Favg(\sop E, \sop U)$ as a 
function of $\Favg(\sop E\circ\lamCn, \sop U)$ and $\Favg(\lamCn,\sop I)$.

This bound is valid for any maps $\sop A$ and $\sop B$.  There exist
$\sop A$ and $\sop B$ that saturate the upper bound, but the lower
bound is not tight, for reasons we discuss in \app{bound}. Generally,
this method gives better bounds when the operation $\sop E$ is close
to $\sop U$ and when $\sop N$ is close to $\sop I$ (i.e. the
imperfections in the randomizing operations are small). Because these
lower and upper bounds---just as the bounds in Ref.~\cite{MGJ+12}---are
not close to each other except in the regime where $\sop E$ is close
to $\sop U$, they are not useful for the type of
tomographic reconstruction performed in section \sec{tomography},
where an arbitrary map might be far from a Clifford map or from a map
that is composed of Clifford maps and $O(\textrm{poly}(n))$ $\sop T$ maps.

Previous work on average-fidelity estimates based on RB have derived the 
bound~\cite{MGJ+12}
\begin{align}
\label{eq:theirbound}
\chi_{0,0}^{\sop A}&=\dfrac{(d^2-1)\chi_{0,0}^{\sop A\circ\sop B}}
{d^2\chi_{0,0}^{\sop B}}\pm E\\
\label{eq:easwarbound2}
E&=
\left|\chi_{0,0}^{\sop B}-\dfrac{(d^2-1)\chi_{0,0}^{\sop A\circ\sop B}}
{d^2\chi_{0,0}^{\sop B}}\right|
+\left(\dfrac{d^2-1}{d^2}-\chi_{0,0}^{\sop B}\right),
\end{align}
which is only valid when $\Favg(\sop A,\sop I)\geq 2\Favg(\sop B,\sop
I)-1$, or, in the fidelity estimation context, when $\Favg(\sop E, \sop U)$ is close to
1~\footnote{In~\cite{MGJ+12}, $E$ is given as the minimum of two
  functions. However, in almost all realistic cases, this second function never
  minimizes, and in fact our bounds are always better than the second function,
  so we have left it out.}. There is no way to directly verify from
the experimental data that this requirement holds, but in order to
compare the bounds in Ref.~\cite{MGJ+12} with the bounds derived here, we
use \eq{bound} to bound region of validity of \eq{theirbound}. As
illustrated in \fig{comparebounds}, the bounds derived here are better
when $\Favg(\sop A\circ\sop B, \sop I)$ is close to 1, but are applicable to
the entire range of parameters without additional assumptions about
the maps involved.
  \begin{figure}[ht]
    \begin{center}
    \includegraphics[width=.45\textwidth]{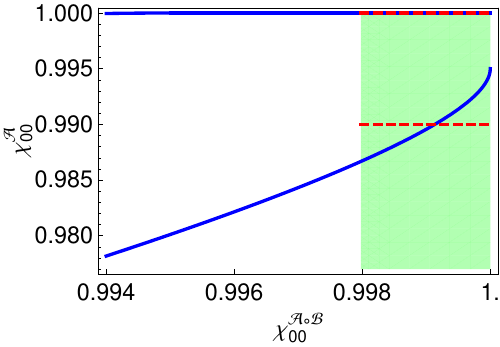}
        \end{center}
          \caption{Bounds on $\chi_{0,0}^{\sop A}$ versus
                $\chi_{0,0}^{\sop A\circ \sop B}$, when
                $\chi_{0,0}^{\sop B}$ is fixed at $0.995$. Our bounds
                are solid blue, while the bounds of Ref.~\cite{MGJ+12}
                are red dashed. The bounds of Ref.~\cite{MGJ+12} are
                valid in the green shaded region, while our bounds are
                valid for all values of $\chi_{0,0}^{\sop A\circ \sop
                  B}$.}
            \label{fig:comparebounds}
 \end{figure}

\subsection{Confidence Bounds on Fidelity Estimates}\label{sec:conf-bounds}
Here we show how to extract $F(\sop E,sop U)$ from the estimated
points $F_k(\sop E,\sop U)$ (the average fidelity of a length-$k$
RB sequence - see \ref{sec:RB}). We rigorously
bound the error and sampling complexity of this non-linear fit.

One can easily show that, using the Hoeffding bound, an estimate
$\widetilde{F}_k$ for $F_k(\sop E,\sop U)$ can be obtained such
that~\cite{MGE11}
\begin{align}
\label{eq:hoeff}
\Pr(|\widetilde{F}_k-F_k(\sop E,\sop U)|\geq \epsilon')\leq \delta'
\end{align}
with a number of samples $O\left ({1\over{\epsilon'}^2}\log{1\over
    \delta'}\right)$ that is independent of the number of qubits in
the system.  What we show here is that this allows for $p$ (and thus
$\overline{F}(\sop E,\sop U)$) to be estimated with a number of samples that also
scales well with some desired accuracy and confidence. In standard RB
experiments, $p$ is estimated by numerical fits to the
$\widetilde{F}_k$ with many different sequence lengths, but the
dependence of the error on the number of samples per sequence length
is difficult to analyse. Here we take a different approach that leads
to simple bounds on the accuracy and confidence.

Since $F_k(\sop E,\sop U)=A_0 p^k+B_0$, it is easy to see that
\begin{align}
p & = {F_2 - F_\infty\over F_1 - F_\infty},
\label{eqn:p-est}
\end{align}
and therefore, at least in principle, $p$ can be estimated by using
only sequences of length 1 and 2, along with a sequence long enough to
ensure $|A_0 p^k|\ll|B_0| $~\footnote{In practice this sequence length
  can be estimated roughly from the rough rate of decay in experiments
  without an accurate estimate for any of the model parameters.}, with
corresponding expectation denoted by $F_\infty$. Assuming each
$\widetilde{F}_i$ is estimated with accuracy $\epsilon'$ and
confidence $1-\delta'$, and that $0$ is not in the confidence interval
for $\widetilde{F}_1-\widetilde{F}_\infty$, it follows that the
estimate $\widetilde{p}$ for $p>0$ and $A_0>0$ satisfies
\begin{align}
{p - {2 \epsilon'\over A_0 p}\over 1+{2 \epsilon'\over A_0 p}}
\le \widetilde{p} \le
{p + {2 \epsilon'\over A_0 p}\over 1-{2 \epsilon'\over A_0 p}}
\label{eq:p-bounds}
\end{align}
with probability at least $1-3\delta'$ (similar expressions hold for
the cases negative $p$ or $A_0$, but, for simplicity, we focus on the
expressions for the positive case).  If $|A_0|$ or $|p|$ are small,
these bounds diverge, so it is important to test the data to exclude
these cases.

Note that $A_0$ is independent of the sequences being used so that one
can choose to estimate $A_0$ from a sequence with large $p$. Denoting
the $F_i$ estimates for those sequences as $\widetilde{F}'_i$, and
assuming the confidence interval for
$\widetilde{F}'_2-\widetilde{F}_\infty'$ does not include 0, $A_0$ is
bounded below via
\begin{align}
A_0 \ge {(\widetilde{F}'_1 - \widetilde{F}'_\infty-2\epsilon')^2\over \widetilde{F}'_2 - \widetilde{F}'_\infty+2\epsilon'}\equiv a,
\end{align}
with probability at least $1-3\delta'$. From \eq{hoeff},
\begin{align}
a p - 2 \epsilon' \le A_0 p - 2 \epsilon'\le
\widetilde{F}_1-\widetilde{F}_\infty,
\end{align}
and thus
\begin{align}
p \le {\widetilde{F}_1-\widetilde{F}_\infty + 2 \epsilon'\over a}
\end{align}
so that if one desires an accuracy $\epsilon$ for $\widetilde{p}$, whenever 
\begin{align}
{\widetilde{F}_1-\widetilde{F}_\infty + 2 \epsilon'\over a}\le \epsilon
\end{align}
one can set $\widetilde{p}=0$ thereby avoiding the divergent 
confidence intervals while still providing estimates with the desired
accuracy.

Similarly, from \eq{hoeff},
\begin{align}
\widetilde{F}_1-\widetilde{F}_\infty \le
A_0 p + 2 \epsilon',
\end{align}
so whenever
\begin{align}
{\widetilde{F}_1-\widetilde{F}_\infty + 2 \epsilon'\over a}\ge \epsilon,
\end{align}
it follows that
\begin{align}
a \epsilon - 4 \epsilon' \le A_0 p \le A_0,
\label{eq:A0p-bound}
\end{align}
so choosing $\epsilon'=4\epsilon^2a$ one can safely Taylor expand 
\eq{p-bounds} to first order in $\epsilon$ to obtain
\begin{align}
p - \epsilon - O(\epsilon^2)
\le \widetilde{p} \le
p + \epsilon + O(\epsilon^2),
\end{align}
with probability at least $1-\delta=1-6\delta'$, 
as desired, using $O\left ({1\over{\epsilon}^4}\log{6\over
    \delta}\right)$ samples.

This immediately gives that an estimate $\widetilde{F}$ for 
$\Favg(\sop E,\sop U)$ can be obtained such that
\begin{align}
\Pr(|\widetilde{F}_k-\Favg(\sop E,\sop U)|\geq \epsilon)\leq \delta
\end{align}
with $O\left ({1\over{\epsilon}^4}\log{1\over
    \delta}\right)$ samples.

\section{Summary and Outlook}
\label{sec:conclusion}

We have demonstrated that, using information from multiple RB
experiments, it is possible to reconstruct the unital part of any
completely-positive trace-preserving map in a way that is robust
against preparation and measurement errors, thereby avoiding some
forms of systematic errors that plague more traditional tomographic
reconstruction protocols.  The unital part of a map consists of the
vast majority of the parameters of that map, including all parameters
necessary to describe any deterministic unitary map, as well as any
random unitary map, such as dephasing with respect to any eigenbasis.

We also presented a robust procedure for bounding the average fidelity
to an arbitrary unitary, and show that this protocol is efficient for
a large class of unitaries outside of the Clifford group. The overhead
of the procedure depends on how the unitary is decomposed as a linear
combination of Clifford group unitary maps, and we give rigorous
bounds on the number of samples needed to achieve some desired
accuracy and confidence in the fidelity estimate.

The extension of these results to non-qubit systems remains an open
problem. In addition, the characterization of the non-unital part of a
map in a robust manner seems to present a larger challenge than the
characterization of the unital part. New techniques are needed to
access this important information.

\acknowledgements 

We thank Easwar Magesan for many illuminating discussions.  S.K. was
partially supported by the U.S. Department of Energy under cooperative
research agreement contract No. DE-FG02-05ER41360.  This research was
funded by the Office of the Director of National Intelligence (ODNI),
Intelligence Advanced Research Projects Activity (IARPA), through the
Army Research Office grant W911NF-10-1-0324. All statements of fact,
opinion or conclusions contained herein are those of the authors and
should not be construed as representing the official views or policies
of IARPA, the ODNI, or the U.S. Government.

%

\appendix

\section{Unital Maps and the Linear Span Of Unitary Maps}
\label{app:clifford-span}

The Pauli-Liouville representation is particularly convenient when
discussing the Clifford group of $n$-qubit unitary maps, because, in
this representation, such maps are {\em monomial
  matrices}~\cite{mvdn11a,mvdn11b}. In the particular case of qubits,
$\sop E_{ij}^\PL\in\{\pm1,0\}$ for a unitary in the Clifford
group. Given these facts, we can now straightforwardly prove the
result about the linear span of Clifford group maps on $n$
qubits. First, we need to prove a small result about Clifford group
unitaries.

\begin{claim} For any two $n$-qubit Pauli operators $\op P_{i\not=0}$ and
  $\op P_{j\not=0}$, there exists a Clifford group unitary $\op C$
  such that
  $\op C \op P_{i}\op C^\dagger=\op P_{j}$.\label{claim:PL-matrix-elements}
\end{claim}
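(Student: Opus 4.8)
The plan is to show that the Clifford group acts transitively on the set of nonzero Pauli operators by conjugation. The key observation is that any two nonzero $n$-qubit Pauli operators $\op P_i$ and $\op P_j$ share the same basic algebraic structure: each is a Hermitian, unitary, traceless operator that squares to the identity and has eigenvalues $\pm 1$ with equal multiplicity. This structural equivalence is exactly what a Clifford conjugation can exploit, since the Clifford group is by definition the normalizer of the Pauli group, so conjugation by any $\op C$ permutes the Pauli group among itself.

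First I would reduce to a canonical form. It suffices to exhibit, for each nonzero Pauli $\op P_i$, a Clifford $\op C$ mapping $\op P_i$ to a fixed reference operator—the natural choice being $\op Z_1 = \op Z \otimes \I^{\otimes(n-1)}$, i.e. a single-qubit $\op Z$ on the first qubit tensored with identity elsewhere. If I can always conjugate an arbitrary nonzero Pauli to $\op Z_1$, then composing one such map with the inverse of another gives the desired $\op C$ sending $\op P_i$ to $\op P_j$, since Clifford group elements are closed under inverse and composition. So the whole claim collapses to a single transitivity target.

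Next I would construct the reduction explicitly. Write $\op P_i$ (up to an irrelevant phase) as a tensor product of single-qubit Paulis on some nonempty support. On each qubit in the support I can apply a single-qubit Clifford (built from $\op H$ and the phase gate) to rotate the local factor $\op X$ or $\op Y$ into $\op Z$, so that after this layer $\op P_i$ becomes a product of $\op Z$'s on the support. Then, using $\CNOT$ gates—which are Clifford and conjugate products of $\op Z$'s in a controlled fashion—I can collapse a string of $\op Z$'s onto a single qubit: a $\CNOT$ with control on a qubit carrying $\op Z$ and target on another $\op Z$ multiplies out one of the factors. Iterating along the support concentrates the operator onto one qubit as $\pm \op Z$, and a final single-qubit Clifford (or identity) fixes the sign and relocates it to the first qubit if needed. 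All the gates used are standard Clifford generators, so their composition is in the Clifford group.

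The main obstacle, or at least the step requiring the most care, is the inductive bookkeeping in the $\CNOT$ reduction: I must verify that each $\CNOT$ conjugation indeed removes one $\op Z$ from the support while leaving the rest of the string intact and introducing no unwanted $\op X$ or $\op Y$ components. This follows from the standard $\CNOT$ conjugation rules ($\CNOT$ sends $\op Z\otimes\I\mapsto \op Z\otimes\I$, $\I\otimes \op Z\mapsto \op Z\otimes \op Z$, and so on), but one must track signs and ensure that after rotating everything into the $\op Z$-basis the reduction stays entirely within $\op Z$-type operators. Once transitivity to $\op Z_1$ is established, the claim follows immediately by the composition argument above.
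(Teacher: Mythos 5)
Your proof is correct and follows essentially the same route as the paper's: reduce each non-identity tensor factor to a fixed single-qubit Pauli via local Cliffords, then use the $\CNOT$ conjugation rules to modify the support. The only cosmetic differences are that you normalize to $\op Z$ rather than $\op X$ and pass through a single-qubit canonical form $\op Z \otimes \I^{\otimes(n-1)}$ (composing one reduction with the inverse of the other), whereas the paper maps one standardized string directly to the other by growing/shrinking the support with $\CNOT$s.
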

\begin{proof}
  This claim shows that there are no subsets of non-identity
  multi-qubit Pauli operators that do not mix under the action of the
  Clifford group.  $\op P_i$ and $\op P_j$ can both be written as
  tensor products of single-qubit Pauli operators and identity
  operators $\I$, where in the tensor product of each there is at
  least one element that is not $\I$. Using local Clifford group
  unitaries one can take each non-identity element in each tensor
  product to the single-qubit Pauli operator $\op X$.  We call these
  new Pauli operators $\op P_i'$ and $\op P_j'$.  Now the problem is
  equivalent to finding Clifford group unitaries that take one tensor
  product of $\op X$ and $\I$ to another.

  Let $\CNOT_{k,l}$ denote the controlled-not unitary with qubit $k$
  as a control and qubit $l$ as a target. The $\CNOT$ is a well known
  unitary in the Clifford group with the property that $\CNOT_{k,l}\op
  X^{(k)}\CNOT_{k,l}=\op X^{(k)}\otimes \op
  X^{(l)}$, where we use $\op X^{(i)}$ to denote $\op X$
  acting on the $i^{\rm{th}}$ qubit. In this way one can
  increase or decrease the number of $\op X$ in the tensor
  product decomposition of $\op P_i'$ using unitary maps, as long as
  there is at least one $\op X$ in the tensor product.  This
  means that any tensor product of $\I$'s and $\op X$'s on
  $n$ qubits can be mapped to any other tensor product of $\I$ and
  $\op X$'s on $n$ qubits through the use of $\CNOT$
  unitaries---in particular, one can map $\op P_{i}'$ to $\op P_{j}'$.
\end{proof}

Now we can prove the intended result.

\cliffordspan*
\begin{proof}
  It suffices to show that any matrix element in the unital part of a
  map (in the Pauli-Liouville representation) can be written as a
  linear combination of Clifford group unitary maps.

  The Pauli-Liouville representation of unitaires in the $n$-qubit
  Clifford group are monomial matrices with non-zero entries equal to
  $\pm1$. For any given such unitary $\op C$, one can construct $4^n$
  orthogonal unitaries of the form $\op P_i\op C$, with corresponding
  $4^n$ mutually orthogonal Pauli-Liouville representation
  matrices. Pauli operators are diagonal in the Pauli-Liouville
  representation, so that for a fixed $\op C$, the Pauli-Liouville
  representations of all $\op P_i\op C$ have support in the same set
  of $4^n$ matrix elements as the Pauli-Liouville representation of
  $\op C$, and thus the values of any of these matrix elements for any
  map $\sop E$ can be recovered by collecting the Hilbert-Schmidt
  inner products between $\sop E^\PL$ and the Pauli-Louville
  representation of the map for the $\op P_i\op C$, i.e., $\tr \sop
  E (\sop P_i \sop C)^\dagger$. From
  Claim~\ref{claim:PL-matrix-elements}, one can choose a Clifford
  group unitary that has support on any particular matrix element in
  the unital block, therefore any unital matrix can be written as a
  linear combination of Clifford group unitary maps. Since Clifford
  group maps are unital, this concludes the proof.
\end{proof}

\section{Reconstruction of the unital part with imperfect operations}
\label{app:unital-recon}

In the main body of this paper we describe how RB allows for the
reconstruction of the unital parts of $\sop E\circ\sop N$ and $\sop
N$, where $\sop E$ is some quantum operations one would like to
characterize, and $\sop N$ is the error operation associated with each
of the randomizing operations. We now prove the result which allows
for the estimation of the unital part of $\sop E$ alone, given an
estimate of the unital part of $\sop N$.

\noisyunitalest*

\begin{proof}
  Any trace-preserving linear map $\sop A$ can be written in
  the Pauli-Liouville representation as
\begin{align}
    \sop A^\PL & = 
    \left(
      \begin{array}{cc}
        1               & \vec{0}^T\\
        \vec{t}_{\sop A} &\textbf{T}_{\sop A}
      \end{array}
    \right),
  \end{align}
  where, as discussed previsouly, the unital part is
\begin{align}
    \sop A'^\PL & = 
    \left(
      \begin{array}{cc}
        1               & \vec{0}^T\\
        \vec{0} &\textbf{T}_{\sop A}
      \end{array}
    \right).
  \end{align}
  The Pauli-Liouville representation of the composition of two
  trace-preserving linear maps $\sop A$ and $\sop B$ is given by
  the multiplication of the Pauli-Liouville representations, resulting in 
\begin{align}
    (\sop A\circ\sop B)^\PL & = 
    \left(
      \begin{array}{cc}
        1               & \vec{0}^T\\
        \vec{t}_{\sop A} + {\bf T}_{\sop A} \vec{t}_{\sop B} & {\bf T}_{\sop A} {\bf T}_{\sop B}
      \end{array}
    \right),
  \end{align}
  and thus
\begin{align}
    (\sop A\circ\sop B)'^\PL & = 
    \left(
      \begin{array}{cc}
        1               & \vec{0}^T\\
        \vec{0} & {\bf T}_{\sop A} {\bf T}_{\sop B}
      \end{array}
    \right),\\
    & = (\sop A)'^\PL (\sop B)'^\PL.
  \end{align}
  It follows immediatelly that 
\begin{align}
    (\sop A)'^\PL= (\sop A\circ\sop B)'^\PL [ (\sop B)'^\PL ] ^{-1},
  \end{align}
  if the inverse exists, and
\begin{align}
    \sop A'= (\sop A\circ\sop B)' \circ (\sop B') ^{-1},
  \end{align}
  by the Pauli-Liouville isomorphism. The lemma follows by setting
  $\sop A=\sop E\circ\sop N$ and $\sop B=\sop N$.
\end{proof}

\section{Complete-Positivity of the Projection of Single Qubit Operations onto the Unital Subspace}
\label{app:projection}
In this appendix, we prove that for a CPTP map $\sop E$ acting on a
single qubit, $\sop E'$, the unital part of $\sop E$ (see
\eq{nonunitalpart}), is always a CPTP map.

Recall that the Pauli-Liouville representation of a single qubit map
$\sop E$ may be written as
\begin{align}
\sop E^{\PL}=\left(
\begin{array}{cc}
1& \vec{0}\\
\vec{t}_{\sop E} &\textbf{T}_{\sop E}
\end{array}\right).
\end{align}
King and Ruskai \cite{KR01} show that there exist unitary maps $\sop U$
and $\sop V$ such that
\begin{align}
\sop U\breve{\sop E}\sop V=\sop E
\end{align}
where
\begin{align}
\breve{\sop E}^{\PL}=\left(
\begin{array}{cc}
1& \vec{0}\\
\vec{\tau} &\textbf{D}
\end{array}\right)=
\left(
\begin{array}{cccc}
1&0&0&0\\
\tau_1& \lambda_1&0&0\\
\tau_2&0& \lambda_2&0\\
\tau_3&0&0 &\lambda_3\\
\end{array}\right).
\end{align}

To prove $\sop E'$ is CPTP, we first show that $\breve{\sop E}'$ (the
projection of $\breve{\sop E}$ onto the unital block) is always CPTP,
and then we prove that if $\breve{\sop E}'$ is CPTP, $\sop E'$ is
CPTP.

\begin{lem}\label{lemm:hatE}
For single qubit operations, $\breve{\sop E}'$ is always CPTP
\end{lem}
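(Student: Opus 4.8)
The plan is to exhibit $\breve{\sop E}'$ as a convex combination of CPTP maps, thereby avoiding any direct appeal to the King-Ruskai positivity inequalities. The starting observation is that $\breve{\sop E}$ is itself CPTP: since $\sop U\breve{\sop E}\sop V=\sop E$ with $\sop U,\sop V$ unitary maps, we have, reading the product as composition, $\breve{\sop E}=\sop U^\dagger\circ\sop E\circ\sop V^\dagger$, a composition of CPTP maps (unitary conjugations are CPTP, and $\sop E$ is CPTP by hypothesis), hence itself CPTP.

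Next I would compute the Pauli twirl $\frac14\sum_{k=0}^3\sop P_k\circ\breve{\sop E}\circ\sop P_k$ and show it equals $\breve{\sop E}'$. Working in the Bloch picture, $\breve{\sop E}$ sends $\vec r\mapsto\vec\tau+\textbf{D}\vec r$ with $\textbf{D}=\mathrm{diag}(\lambda_1,\lambda_2,\lambda_3)$, and conjugation by $\sop P_k$ acts as a diagonal sign-flip $S_k$ on the Bloch vector ($S_0=\mathrm{diag}(+,+,+)$, $S_1=\mathrm{diag}(+,-,-)$, and so on). Because $\textbf{D}$ is diagonal, $S_k\textbf{D}S_k=\textbf{D}$, so each term of the twirl leaves the unital block $\textbf{D}$ untouched. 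The translation transforms as $\vec\tau\mapsto S_k\vec\tau$, and averaging the four sign patterns gives $\frac14\sum_k S_k\vec\tau=\vec0$. Thus the twirl annihilates $\vec\tau$ while preserving $\textbf{D}$, which is precisely the passage from $\breve{\sop E}$ to $\breve{\sop E}'$ described by \eq{nonunitalpart}.

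Finally, each summand $\sop P_k\circ\breve{\sop E}\circ\sop P_k$ is a composition of CPTP maps and therefore CPTP, and a uniform average of CPTP maps is again CPTP, since both complete positivity (positivity of the Choi matrix) and trace preservation are stable under convex combinations. Consequently $\breve{\sop E}'$ is CPTP, as claimed. (As an independent check one could instead verify that the Choi matrix of $\breve{\sop E}'$ block-diagonalizes into two $2\times2$ blocks whose positivity is equivalent to the tetrahedron conditions $|1\pm\lambda_3|\ge|\lambda_1\pm\lambda_2|$, and argue these follow from the CP conditions for the full $\breve{\sop E}$; but the twirl argument is cleaner and bypasses the explicit inequalities entirely.)

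The only step requiring care is the bookkeeping in the twirl: one must confirm that each component of $\vec\tau$ receives exactly two plus signs and two minus signs across $k=0,1,2,3$, so that the average vanishes, while the diagonal entries $\lambda_i$ are genuinely fixed by the double sign-flip $S_k\textbf{D}S_k$. This is a short finite verification with an explicit sign table rather than a substantive obstacle, and I expect it to be the one place worth spelling out in full.
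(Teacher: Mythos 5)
Your proof is correct, and it takes a genuinely different route from the paper's. The paper proves \lemm{hatE} by brute force: it invokes the Ruskai--Szarek--Werner characterization of single-qubit complete positivity (\eqrange{cond1}{cond3}) and verifies, with some nontrivial algebra (including deriving $\sum_i\lambda_i^2+\sum_i\tau_i^2\le 1$ from positivity of $\breve{\sop E}$ acting on the maximally mixed state and on Pauli eigenstates), that those inequalities survive setting $\tau_i=0$. You instead exhibit $\breve{\sop E}'$ as the Pauli twirl $\frac14\sum_{k=0}^{3}\sop P_k\circ\breve{\sop E}\circ\sop P_k$: each summand is CPTP because $\breve{\sop E}=\sop U^\dagger\circ\sop E\circ\sop V^\dagger$ is a composition of CPTP maps, and the CPTP set is convex, so the twirl is CPTP; the sign table you flag does check out ($S_1=\mathrm{diag}(+,-,-)$, $S_2=\mathrm{diag}(-,+,-)$, $S_3=\mathrm{diag}(-,-,+)$, so every component of $\vec\tau$ picks up two $+$ and two $-$ signs, while $S_k\textbf{D}S_k=\textbf{D}$ termwise). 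The one point worth emphasizing is that your identity between the twirl and the unital projection holds \emph{only} because $\textbf{D}$ is diagonal --- for a generic unital block $\textbf{T}$ the twirl would also erase its off-diagonal entries --- so the King--Ruskai canonical form is doing essential work in your argument, just as it does in the paper's. What each approach buys: yours is shorter, needs no external CP inequalities beyond the canonical form itself, and makes transparent why the lemma is tied to the existence of a diagonal canonical form (hence why it fails for multiple qubits, where no such form is available); the paper's computation, while heavier, yields the explicit parameter constraints and reuses machinery (the RSW conditions) that the surrounding discussion of non-unital bounds in \sec{non-unital} relies on anyway.
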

\begin{proof}
Ruskai et al.~\cite{RSW02} prove that $\breve{\sop E}$ is CP if and only if
\begin{align}
&(\lambda_1+\lambda_2)^2\leq 
(1+\lambda_3)^2-\tau_3^2-(\tau_1^2+\tau_2^2)
\left(\frac{1+\lambda_3\pm \tau_3}{1-
\lambda_3\pm \tau_3}\right)
\label{eq:cond1}\\
&(\lambda_1-\lambda_2)^2\leq 
(1-\lambda_3)^2-\tau_3^2-(\tau_1^2+\tau_2^2)
\left(\frac{1-\lambda_3\pm \tau_3}{1+
\lambda_3\pm \tau_3}\right)
\label{eq:cond2}\\
&\left(1-(\lambda_1^2+\lambda_2^2+\lambda_3^2)-
(\tau_1^2+\tau_2^2+\tau_3^2)\right)^2\geq\nonumber\\ 
&4(\lambda_1^2(\tau_1^2+\lambda_2^2)+
\lambda_2^2(\tau_2^2+\lambda_3^2)+\lambda_3^2(\tau_3^2+
\lambda_1^2)-2\lambda_1\lambda_2\lambda_3)
\label{eq:cond3}
\end{align}
where in \eq{cond1} and \eq{cond2}, if $|\lambda_3|+|\tau_3|=1$, then
$\tau_1$ and $\tau_2$ must be 0 for the map to be CP. 

Notice that if these conditions are satisfied for a CPTP map $\breve{\sop
E}$, then they are also satisfied for the map $\breve{\sop E}^-$,
which is the same as $\breve{\sop E}$, except with
$\tau_1\rightarrow-\tau_1$, $\tau_2\rightarrow-\tau_2$,
$\tau_3\rightarrow-\tau_3$. Hence $\breve{\sop E}^-$ must also be
CPTP.

Now the convex combination of CPTP maps is also CPTP, so 
$1/2(\breve{\sop E}^-+\breve{\sop E})=\breve{\sop E}'$ is CPTP. 
\end{proof}

\unitalproject*
\begin{proof}
\lemm{hatE} shows that the projection of $\breve{\sop E}$ onto its unital 
part results in a CP map. So here we show this implies the projection of 
the map $\sop E$ onto its unital part results in a CP map.

Because $\sop U$ and $\sop V$ are unitaries, their Pauli-Liouville 
representations have the form 
\begin{align}
\sop U^{\PL}=\left(
\begin{array}{cc}
1& \vec{0}\\
\vec{0} &\textbf{U}
\end{array}\right),\text{ }
\sop V^{\PL}=\left(
\begin{array}{cc}
1& \vec{0}\\
\vec{0} &\textbf{V}
\end{array}\right).
\end{align}
So 
\begin{align}
\sop E^{\PL}=\sop U^{\PL}\breve{\sop E}^{\PL}\sop V^{\PL}=
\left(
\begin{array}{cc}
1& \vec{0}\\
\textbf{U}\vec{t} &\textbf{U}\textbf{D}\textbf{V}
\end{array}\right).
\end{align}
Now suppose $
\breve{\sop E}^{'\PL}=\left(
\begin{array}{cc}
1& \vec{0}\\
0 &\textbf{D}
\end{array}\right)$ is a valid CP map. Then
\begin{align}
\sop W^{\PL}=
\sop U^{\PL}\breve{\sop E}^{'\PL}\sop V^{\PL}=
\left(
\begin{array}{cc}
1& \vec{0}\\
0 &\textbf{U}\textbf{D}\textbf{V}
\end{array}\right)
\end{align}
 is also a valid CPTP map because the composition of valid quantum
maps is always a valid quantum map. However, by \eq{nonunitalpart}
 $\sop W$ is equal to $\sop E'$, so the unital part of a single qubit map
is always CPTP.
\end{proof}

\section{Bounds on Fidelity}
\label{app:bound}
Recall that for an operation $\sop E$, the $\chi$-matrix representation is
\begin{align}
\sop E(\op\rho)=\sum_{i,j}{\chi}_{i,j}^{\sop E}\op{P}_i\op\rho\op{P}_j.
\end{align}
Due to complete positivity constraints $\chi$ 
matrix elements satisfy 
\begin{align}
\label{eq:pos_constr}
\chi_{i,j}^{\sop E}\leq\sqrt{\chi_{i,i}^{\sop E}\chi_{j,j}^{\sop E}}.
\end{align}

Composing two maps, their $\chi$-matrix represenations compose as
\begin{align}
\sop A\circ\sop B(\op \rho)=\sum_{m,n,k,j}\chi_{m,n}
^{\sop A}\chi_{k,j}^{\sop B}\op P_m\op P_k\op\rho \op P_j\op P_n.
\end{align}
Let $\sigma_i(m)$ be the index such that $\op P_{\sigma_i(m)}\op P_m=\op P_i$. Then 
using the fact that the absolute value is greater than the real or imaginary
 parts of a complex number, we obtain
\begin{align}
\label{eq:boundfirststep}
\chi_{i,i}^{\sop A\circ \sop B}&=
\chi_{i,i}^{\sop A}\chi_{0,0}^{\sop B}\pm\left(2\sum_{m\neq0}\left|\chi^{\sop
 A}_{\sigma_i(m),i}\right| \Big|\chi^{\sop B}_{m,0}\Big|\right.\nonumber\\
&\left.+\sum_{m,n\neq0}\left|\chi^{\sop A}_{\sigma_i(m),\sigma_i(n)}\Big| 
\Big|\chi^{\sop B}_{m,n}\right|\right).
\end{align}
Looking at the term $\sum_{m\neq0}|\chi^{\sop A}_{\sigma_i(m),i}| 
|\chi^{\sop B}_{m,0}|$ and using \eq{pos_constr} and the Cauchy-Schwarz inequality, we have
\begin{align}
\sum_{m\neq0}&\Big|\chi^{\sop A}_{\sigma_i(m),i}\Big| \Big|\chi^{\sop B}_{m,0}\Big|\leq
 \sqrt{\sum_{m\neq 0}|\chi^{\sop A}_{\sigma_i(m),i}|^2\sum_{m\neq0}
|\chi^{\sop B}_{m,0}|^2}\nonumber\\
&\leq\sqrt{\sum_{m\neq0}\chi^{\sop A}_{ \sigma_i(m),
 \sigma_i(m)}\chi^{\sop A}_{i,i}\sum_{m
\neq0}\chi^{\sop B}_{m,m}\chi^{\sop B}_{0,0}}\nonumber\\
&=\sqrt{(1-\chi^{\sop A}_{i,i})\chi^{\sop A}_{i,i}(1-\chi^{\sop
 B}_{0,0})\chi^{\sop B}_{0,0}}
\end{align}
Similarly, the term $\sum_{m,n\neq0}\left|\chi^{\sop A}_{\sigma_i(m),
\sigma_i(n)}\Big| 
\Big|\chi^{\sop B}_{m,n}\right|$ gives
\begin{align}
&\sum_{m,n\neq0}\left|\chi^{\sop A}_{\sigma_i(m),\sigma_i(n)}\Big| 
\Big|\chi_{m,n}\right|\nonumber\\
&\leq\sqrt{\sum_{m, n\neq 0}|\chi^{\sop A}_{\sigma_i(m),\sigma_i(n)}|^2\sum_{m,
 n\neq 0}|\chi^{\sop B}_{m,n}|^2}\nonumber\\
&\leq\sqrt{\sum_{m, n\neq 0}\chi^{\sop A}_{\sigma_i(m),
\sigma_i(m)}\chi^{\sop A}_{\sigma_i(n),\sigma_i(n)}\sum_{m, n\neq 
0}\chi^{\sop B}_{m,m}\chi^{\sop B}_{n,n}}\nonumber\\
&=(1-\chi^{\sop A}_{i,i})(1-\chi^{\sop B}_{0,0})
\end{align}

So we have
\begin{align}
\chi^{\sop A\circ\sop B}_{i,i}&=\chi^{\sop A}_{i,i}\chi^{\sop B}_{0,0}\pm2\sqrt{(1-\chi^{\sop A}_{i,i})\chi^{\sop A}_{i,i}
(1-\chi^{\sop B}_{0,0})\chi^{\sop B}_{0,0}}\nonumber\\
&\pm(1-\chi^{\sop A}_{i,i})(1-\chi^{\sop B}_{0,0}).
\end{align}
Setting $i=0$ gives the desired result.

To see why the lower bound is in general not tight, consider
\eq{boundfirststep}. For the lower bound on $\chi_{i,i}^{\sop
  A\circ\sop B}$ we take all of the terms of the form $\chi^{\sop
  A}_{\sigma_i(m), \sigma_i(n)} \chi^{\sop B}_{m,n}$ and replace them
with $-|\chi^{\sop A}_{\sigma_i(m),\sigma_i(n)}||\chi^{\sop B}_{m,n}|$
because many of these terms have unknown phases, which in the worst
case can have value $-1$. However, when $m=n$, because $\chi$ is
positive semidefinite, we get terms of the form $\chi^{\sop
  A}_{\sigma_i(m), \sigma_i(m)} \chi^{\sop B}_{m,m}=|\chi^{\sop
  A}_{\sigma_i(m), \sigma_i(m)}|| \chi^{\sop B}_{m,m}|$, so we are
subtracting terms which should actually be added. However, there is no
way to address this issue without obtaining more information about the
$\chi$ matrix.

\section{Ordering Of Error Maps}\label{app:gauge}

Throughout this paper, we chose to describe the noisy maps as the
composition of the ideal map and some error map (applied in that
order). That is, the noisy implementation of the map $\sop C_i$ is
expressed as
\begin{align}
\sop N_i\circ\sop C_i
\end{align}
where $\sop N_i$ is the error map and $\sop C_i$ is the ideal Clifford
map.  This choice can be made without loss of generality, and has no
effect on experimental observations.  That is, we could instead
express the implementation of the Clifford $\sop C_i$ as
\begin{align}
\sop C_i\circ\sop N_i^*
\end{align}
where $\sop N_i^*$ is, in general, a different error map. The average
fidelity of these noisy maps to $\sop C_i$ is the same, or,
equivalently, $\sop N_i$ and $\sop N_i^*$ have the same average
fidelity to the identity. However, in general $\sop N_i\not=\sop
N_i^*$. Other process metrics are immune to this problem because they
take the error to be additive rather than multiplicative, and so there
is no ordering choice to be made or imposed.

If all error maps are identical for either of the conventional choices
($\sop N_i=\sop N$ or $\sop N_i^*=\sop N^*$) then \eqref{eq:model}
holds, and small deviations from these cases lead to perturbative
corrections that generalize the results in Ref.~\cite{MGE11,MGE12}. If
the error maps are close to the identity, both perturbative models are
likely to be valid, so $\sop N\approx\sop N^*$---the question of which
convention is used become immaterial. However, if either $\sop N$ or
$\sop N^*$ is far from the identity, low order perturbative expansions
may not be valid for one of the conventions. Individual RB fits cannot
differentiate between these two cases, and the bounds used to isolate
the error in $\sop E$ from $\sop N$ or $\sop N^*$ do not depend on
this conventional choise, so as long as \eqref{eq:model} holds for
{\em some} separation of the error and ideal channel.

A problem arises when one attempts to use
Lemma~\ref{lemm:noisy-unital-est}, as, unless $\sop N\approx\sop N^*$,
the choice of conventions becomes important. The physical regime
where, e.g., $\sop N_i\approx\sop N$ is precisely the regime where
$\sop N_i\approx \sop N_i^*\approx\sop I$, and so this not not likely
to be a problem in practice---within the accuracy of the perturbative
expansions to \eqref{eq:model}, the inversion in
Lemma~\ref{lemm:noisy-unital-est} will be valid, as would a similar
inversion taking the error map to ocurr before the ideal map.

In the more general formal cases where, e.g., $\sop N_i\approx\sop N$
but the $\sop N_i^*$ are very different from each other, there appears
to be no way to choose the appropriate convention from individual
observations. It may simply be the case that the $\sop E'$
reconstruction via Lemma~\ref{lemm:noisy-unital-est} using one
convention is highly unphysical, while the other is physical,
indicating which convention should be used. In the absence of this
indication of systematic errors, however, one should report both
reconstructions or simply choose the worst of the two.
\end{document}